\newif\ifdraft  \draftfalse   
\newif\ifpdflatex  \pdflatexfalse   
\title{Ultimate periodicity of b-recognisable sets : \\ a quasilinear procedure}
\author
{
        Victor Marsault\thanks{Corresponding author, LTCI, Telecom ParisTech}~
        and 
        Jacques Sakarovitch\thanks{LTCI, CNRS / Telecom ParisTech}
}
\date{\today}
\newcommand{\BIBINPUTSDIR}{}
\newcommand{\vmfigure}[2]{
  \begin{figure}[ht!]
    \centering
    \input{#1/#2.label}
    \noindent\makebox[\textwidth]{\includegraphics{#1/#2}}
    \ifthenelse{\equal{\curlang}{fr}}
    {\caption{\frcurrentcaption}}{}
    \ifthenelse{\equal{\curlang}{en}}
    {\caption{\encurrentcaption}}{}%
    \def\vmlastfigure{#2}
    \lfigure{#2}
  \end{figure}
}
\newcommandx{\newtheoremy}[3][2={}]{
  \ifthenelse{\equal{#2}{}}{
    \ifcsmacro{#1}{}{\newtheorem{#1}{#3}}
  }{
    \ifcsmacro{#1}{}{\newtheorem{#1}[#2]{#3}}
  }
}
\newcommand{\thmBlockFont}[1]{\sc{#1}}
\newcommand{\RefSeparator}{.}
\newcommand{\generalref}[2]{\ref{#1\RefSeparator#2}}
\newcommand{\generalpageref}[2]{\pageref{#1\RefSeparator#2}}
\newcommand{\generallabel}[2]{\label{#1\RefSeparator#2}}
\newcommand{\PageName}{page}
\newcommand{\AlgorithmName}{Algorithm}
\newcommand{\AlgorithmRefName}{Algorithm}
\newcommand{\AlgorithmRefPrefix}{a}
\newcommand*{\ralgorithm}{\@ifstar{\generalref{\AlgorithmRefPrefix}}{\AlgorithmRefName~\ralgorithm*}}
\newcommand*{\palgorithm}{\@ifstar{\generalpageref{\AlgorithmRefPrefix}}{\PageName~\palgorithm*}}
\newcommand{\CorollaryName}{Corollary}
\newcommand{\CorollaryRefName}{Corollary}
\newcommand{\CorollaryRefPrefix}{c}
\newcommand*{\rcorollary}{\@ifstar{\generalref{\CorollaryRefPrefix}}{\CorollaryRefName~\rcorollary*}}
\newcommand*{\pcorollary}{\@ifstar{\generalpageref{\CorollaryRefPrefix}}{\PageName~\pcorollary*}}
\newcommand{\DefinitionName}{Definition}
\newcommand{\DefinitionRefName}{Definition}
\newcommand{\DefinitionRefPrefix}{d}
\newcommand*{\rdefinition}{\@ifstar{\generalref{\DefinitionRefPrefix}}{\DefinitionRefName~\rdefinition*}}
\newcommand*{\pdefinition}{\@ifstar{\generalpageref{\DefinitionRefPrefix}}{\PageName~\pdefinition*}}
\newcommand{\ExampleName}{Example}
\newcommand{\ExampleRefName}{Example}
\newcommand{\ExampleRefPrefix}{e}
\newcommand*{\rexample}{\@ifstar{\generalref{\ExampleRefPrefix}}{\ExampleRefName~\rexample*}}
\newcommand*{\pexample}{\@ifstar{\generalpageref{\ExampleRefPrefix}}{\PageName~\pexample*}}
\newcommand{\LemmaName}{Lemma}
\newcommand{\LemmaRefName}{Lemma}
\newcommand{\LemmaRefPrefix}{l}
\newcommand{\llemma}[1]{\generallabel{\LemmaRefPrefix}{#1}}
\newcommand*{\rlemma}{\@ifstar{\generalref{\LemmaRefPrefix}}{\LemmaRefName~\rlemma*}}
\newcommand*{\plemmam}{\@ifstar{\generalpageref{\LemmaRefPrefix}}{\PageName~\plemma*}}
\newcommand{\PropositionName}{Proposition}
\newcommand{\PropositionRefName}{Proposition}
\newcommand{\PropositionRefPrefix}{p}
\newcommand{\lproposition}[1]{\generallabel{\PropositionRefPrefix}{#1}}
\newcommand*{\rproposition}{\@ifstar{\generalref{\PropositionRefPrefix}}{\PropositionRefName~\rproposition*}}
\newcommand*{\pproposition}{\@ifstar{\generalpageref{\PropositionRefPrefix}}{\PageName~\pproposition*}}
\newcommand{\PropertyName}{Property}
\newcommand{\PropertyRefName}{Property}
\newcommand{\PropertyRefPrefix}{pp}
\newcommand*{\rproperty}{\@ifstar{\generalref{\PropertyRefPrefix}}{\PropertyRefName~\rproperty*}}
\newcommand*{\pproperty}{\@ifstar{\generalpageref{\PropertyRefPrefix}}{\PageName~\pproperty*}}
\newcommand{\QuestionName}{Question}
\newcommand{\QuestionRefName}{Question}
\newcommand{\QuestionRefPrefix}{q}
\newcommand*{\rquestion}{\@ifstar{\generalref{\QuestionRefPrefix}}{\QuestionRefName~\rquestion*}}
\newcommand*{\pquestion}{\@ifstar{\generalpageref{\QuestionRefPrefix}}{\PageName~\pquestion*}}
\newcommand{\RemarkName}{Remark}
\newcommand{\RemarkRefName}{Remark}
\newcommand{\RemarkRefPrefix}{r}
\newcommand{\lremark}[1]{\generallabel{\RemarkRefPrefix}{#1}}
\newcommand*{\rremark}{\@ifstar{\generalref{\RemarkRefPrefix}}{\RemarkRefName~\rremark*}}
\newcommand*{\premark}{\@ifstar{\generalpageref{\RemarkRefPrefix}}{\PageName~\premark*}}
\newcommand{\NotationName}{Notation}
\newcommand{\NotationRefName}{Notation}
\newcommand{\NotationRefPrefix}{n}
\newcommand*{\rnotation}{\@ifstar{\generalref{\NotationRefPrefix}}{\NotationRefName~\rnotation*}}
\newcommand*{\pnotation}{\@ifstar{\generalpageref{\NotationRefPrefix}}{\PageName~\pnotation*}}
\newcommand{\TheoremName}{Theorem}
\newcommand{\TheoremRefName}{Theorem}
\newcommand{\TheoremRefPrefix}{t}
\newcommand{\ltheorem}[1]{\generallabel{\TheoremRefPrefix}{#1}}
\newcommand*{\rtheorem}{\@ifstar{\generalref{\TheoremRefPrefix}}{\TheoremRefName~\rtheorem*}}
\newcommand*{\ptheorem}{\@ifstar{\generalpageref{\TheoremRefPrefix}}{\PageName~\ptheorem*}}
\newcommand{\FigureRefName}{Figure}
\newcommand{\FigureRefPrefix}{f}
\newcommand{\lfigure}[1]{\generallabel{\FigureRefPrefix}{#1}}
\newcommand*{\rfigure}{\@ifstar{\generalref{\FigureRefPrefix}}{\FigureRefName~\rfigure*}}
\newcommand*{\pfigure}{\@ifstar{\generalpageref{\FigureRefPrefix}}{\PageName~\pfigure*}}
\newcommand{\EquationRefName}{Equation}
\newcommand{\EquationRefPrefix}{eq}
\newcommand*{\requation}{\@ifstar{\generalref{\EquationRefPrefix}}{\EquationRefName~\requation*}}
\newcommand*{\pequation}{\@ifstar{\generalpageref{\EquationRefPrefix}}{\PageName~\pequation*}}
\newcommand{\SectionRefName}{Section}
\newcommand{\SectionRefPrefix}{s}
\newcommand{\lsection}[1]{\generallabel{\SectionRefPrefix}{#1}}
\newcommand*{\rsection}{\@ifstar{\generalref{\SectionRefPrefix}}{\SectionRefName~\rsection*}}
\newcommand*{\psection}{\@ifstar{\generalpageref{\SectionRefPrefix}}{\PageName~\psection*}}
\newcommand{\ProblemName}{Problem}
\newcommand{\ProblemRefName}{Problem}
\newcommand{\ProblemRefPrefix}{pb}
\newcommand*{\rproblem}{\@ifstar{\generalref{\ProblemRefPrefix}}{\ProblemRefName~\rproblem*}}
\newcommand*{\pproblem}{\@ifstar{\generalpageref{\ProblemRefPrefix}}{\PageName~\pproblem*}}
\newcommand{\vmrefprefix}[1]{%
  \ifthenelse{\equal{#1}{corollary}}{\CorollaryRefPrefix}{}%
  \ifthenelse{\equal{#1}{definition}}{\DefinitionRefPrefix}{}%
  \ifthenelse{\equal{#1}{example}}{\ExampleRefPrefix}{}%
  \ifthenelse{\equal{#1}{lemma}}{\LemmaRefPrefix}{}%
  \ifthenelse{\equal{#1}{proposition}}{\PropositionRefPrefix}{}%
  \ifthenelse{\equal{#1}{property}}{\PropertyRefPrefix}{}%
  \ifthenelse{\equal{#1}{question}}{\QuestionRefPrefix}{}%
  \ifthenelse{\equal{#1}{remark}}{\RemarkRefPrefix}{}%
  \ifthenelse{\equal{#1}{notation}}{\NotationRefPrefix}{}%
  \ifthenelse{\equal{#1}{theorem}}{\TheoremRefPrefix}{}%
  \ifthenelse{\equal{#1}{figure}}{\FigureRefPrefix}{}%
  \ifthenelse{\equal{#1}{equation}}{\EquationRefPrefix}{}%
  \ifthenelse{\equal{#1}{section}}{\SectionRefPrefix}{}%
}
\newcommand{\vmrefname}[1]{
  \ifthenelse{\equal{#1}{corollary}}{\CorollaryRefName}{}%
  \ifthenelse{\equal{#1}{definition}}{\DefinitionRefName}{}%
  \ifthenelse{\equal{#1}{example}}{\ExampleRefName}{}%
  \ifthenelse{\equal{#1}{lemma}}{\LemmaRefName}{}%
  \ifthenelse{\equal{#1}{proposition}}{\PropositionRefName}{}%
  \ifthenelse{\equal{#1}{property}}{\PropertyRefName}{}%
  \ifthenelse{\equal{#1}{question}}{\QuestionRefName}{}%
  \ifthenelse{\equal{#1}{remark}}{\RemarkRefName}{}%
  \ifthenelse{\equal{#1}{notation}}{\NotationRefName}{}%
  \ifthenelse{\equal{#1}{theorem}}{\TheoremRefName}{}%
  \ifthenelse{\equal{#1}{figure}}{\FigureRefName}{}%
  \ifthenelse{\equal{#1}{equation}}{\EquationRefName}{}%
  \ifthenelse{\equal{#1}{section}}{\SectionRefName}{}%
}
\def\jscompatibility{0}
\def\curlang{en}
\newcommandx{\vmnewcommandx}[5][2=0,3={},5={},usedefault]{
      \ifthenelse{\equal{\jscompatibility}{0}}
      {\newcommandx{#1}[#2][#3]{#4}} 
      {\newcommandx{#1}[#2][#3]{#5}} 
}
\newcommand{\word}[1]{\textup{\textrm{"}} #1 \textup{\textrm{"}}}
\vmnewcommandx{\wlen}[1]{|#1|}
\vmnewcommandx{\pathx}[1]{\nlb\xrightarrow{\ #1 \ }\nlb}
\vmnewcommandx{\pathy}[1]{\xleftarrow{\ #1 \ }}
\vmnewcommandx{\cod}[1]{\langle #1 \rangle}
\vmnewcommandx{\floor}[1]{\lfloor #1 \rfloor}
\vmnewcommandx{\ceil}[1]{\lceil #1 \rceil}
\newcommandx{\newcommandy}[5][1=i,3=0,4={}]{%
  \ifthenelse{\isundefined{#2}}{\newcommandx{#2}[#3][#4]{#5}}{%
      \ifthenelse{\equal{#1}{i}}{}{}%
      \ifthenelse{\equal{#1}{o}}{\renewcommandx{#2}[#3][#4]{#5}}{}%
    }%
}
\newcommand{\val}[1]{\widebar{#1}}
\newcommand{\ssc}[1]{\textbf{\textsc{#1}}}
\newcommand{\set}[1]{\{#1\}}
\newcommand{\Z}{\mathbb{Z}}
\newcommand{\N}{\mathbb{N}}
\newcommand{\widebar}{\overline}
\newcommand{\nlb}{\nolinebreak}
\renewcommand{\thmBlockFont}[1]{\ssc{#1}}
\newcommand{\frcurrentcaption}{}
\newcommand{\encurrentcaption}{}
\newcommandy{\vmfigure}[2]{
  \begin{figure}[ht!]
    \centering
    \input{#1/#2}
    \ifthenelse{\equal{\curlang}{fr}}
    {\caption{\frcurrentcaption}}{}
    \ifthenelse{\equal{\curlang}{en}}
    {\caption{\encurrentcaption}}{}%
    \def\vmlastfigure{#2}
    \lfigure{#2}
  \end{figure}
}
\renewcommand{\leq}{\leqslant}
\renewcommand{\geq}{\geqslant}
\renewcommand{\phi}{\varphi}
\renewcommand{\epsilon}{\varepsilon}
\renewcommand{\mod}{\text{~mod~}}
\newcommand{\fa}{\forall}
\newcommand{\ext}{\exists}
\newcommand{\e}{\text{\quad}}                 
\newcommand{\ee}{\text{\qquad}}               
\newcommand{\eee}{\text{\qquad \qquad}} 
\newsavebox{\InterSymbolSpace}
\savebox{\InterSymbolSpace}{\hspace{0.125em}}
\newsavebox{\SideFormulaSpace}
\savebox{\SideFormulaSpace}{\hspace{0.2em}}
\newcommand{\msp}{\usebox{\SideFormulaSpace}} 
\newcommand{\xmd}{\usebox{\InterSymbolSpace}} 
\newcommand{\eqpnt}{\makebox[0pt][l]{\: .}}
\newcommand{\quantvrg}{\, , \;}
\newcommand{\quantsp}{\ee }
\newcommand{\quantsmsp}{\e }
\newcommand{\LatinLocution}[1]{{\itshape #1}\xspace}
\newcommand{\cf}{\LatinLocution{cf.}}
\newcommand{\ie}{{that is, }}
\newcommand{\Nmbb}{\mathbb{N}}
\newcommand{\Zmbb}{\mathbb{Z}}
\newcommand{\UNmbb}{{\mathchoice
{\hbox{$\textstyle\rm 1\kern-0.2em I$}}%
{\hbox{$\textstyle\rm 1\kern-0.2em I$}}%
{\hbox{$\scriptstyle\rm 1\kern-0.15em I$}}%
{\hbox{$\scriptscriptstyle\rm 1\kern-0.1em I$}}%
}}
\newcommand{\Ac}{\mathcal{A}}
\newcommand{\Bc}{\mathcal{B}}
\newcommand{\Cc}{\mathcal{C}}
\newcommand{\Dc}{\mathcal{D}}
\newcommand{\Gc}{\mathcal{G}}
\newcommand{\Pc}{\mathcal{P}}
\newcommand{\Sc}{\mathcal{S}}
\newlength{\ArrowDiagSize}
\newlength{\ArrowDiagWidth}
\newenvironment{SLDiag}%
   {\psset{style=SLDiagStyle}\begin{psmatrix}}%
   {\end{psmatrix}}%
\newcommand{\CDSL}{\begin{SLDiag}}
\newcommand{\CDSLF}{\end{SLDiag}}
\newenvironment{DiagraBig}%
{\psmatrix[colsep=7ex,rowsep=6ex,arrows=->,nodesep=1ex,npos=.45]}%
{\endpsmatrix}
\newcommand{\CDB}{\begin{DiagraBig}}
\newcommand{\CDBF}{\end{DiagraBig}}
\newenvironment{DiagraSmall}%
{\psmatrix[colsep=3ex,rowsep=3ex,arrows=->,nodesep=1ex,npos=.45]}%
{\endpsmatrix}
\newcommand{\CDS}{\begin{DiagraSmall}}
\newcommand{\CDSF}{\end{DiagraSmall}}
\newcommand{\matriceuu}[1]%
    {\begin{pmatrix} #1 \end{pmatrix}}
\newcommand{\matricedd}[4]%
    {\begin{pmatrix} #1 & #2 \\ #3 & #4 \end{pmatrix}}
\newcommand{\vecteurd}[2]%
    {\begin{pmatrix} #1 \\ #2 \end{pmatrix}}
\newcommand{\ligned}[2]%
    {\begin{pmatrix} #1 & #2 \end{pmatrix}}
\newcommand{\matricett}[9]%
    {\begin{pmatrix}  #1 & #2 & #3 \\
                      #4 & #5 & #6 \\
                      #7 & #8 & #9 \end{pmatrix}}
\newcommand{\vecteurt}[3]%
    {\begin{pmatrix} #1 \\ #2 \\ #3 \end{pmatrix}}
\newcommand{\lignet}[3]%
    {\begin{pmatrix} #1 & #2 & #3 \end{pmatrix}}
\newlength{\jsWidthCol}
\newlength{\blocinterligne}
\newlength{\blocinterligned}
\newlength{\temparraycolsep}
\newlength{\longueurbloc}
\newlength{\hauteurbloc}
\newlength{\centragebloc}
\newlength{\longueurblc}
\newlength{\hauteurblc}
\newlength{\centrageblc}
\newcommand{\blocligne}[1]%
    {\framebox[\longueurbloc]{$#1$}}
\newcommand{\blocmatrice}[1]%
    {\framebox[\longueurbloc]{\rule[\centragebloc]{0mm}{\hauteurbloc}$#1$}}
\newcommand{\blocvecteur}[1]%
    {\framebox{\rule[\centragebloc]{0mm}{\hauteurbloc}$#1$}}
\newcommand{\blcligne}[1]%
    {\framebox[\longueurblc]{$#1$}}
\newcommand{\blcmatrice}[1]%
    {\framebox[\longueurblc]{\rule[\centrageblc]{0mm}{\hauteurblc}$#1$}}
\newcommand{\blcvecteur}[1]%
    {\framebox{\rule[\centrageblc]{0mm}{\hauteurblc}$#1$}}
\newcommand{\matriceddblvs}[4]
   {\setlength{\temparraycolsep}{\arraycolsep}%
    \setlength{\arraycolsep}{1.3pt}%
        \left (%
    \begin{array}{cc}%
                #1  & \blcligne{#2} \\
            \blcvecteur{#3} & \blcmatrice{#4}
        \end{array}%
        \right )%
    \setlength{\arraycolsep}{\temparraycolsep}%
   }%
\newcommand{\vecteurdblvs}[2]%
   {\setlength{\temparraycolsep}{\arraycolsep}%
    \setlength{\arraycolsep}{1.5pt}%
        \left (%
    \begin{array}{c}%
                #1  \\
            \blcvecteur{#2}
        \end{array}%
        \right )%
    \setlength{\arraycolsep}{\temparraycolsep}%
   }%
\newcommand{\lignedblvs}[2]%
   {\setlength{\temparraycolsep}{\arraycolsep}%
    \setlength{\arraycolsep}{1.5pt}%
        \left (%
    \begin{array}{cc}%
                #1  & \blcligne{#2}
        \end{array}%
        \right )%
    \setlength{\arraycolsep}{\temparraycolsep}%
   }%
\newcommand{\matricettblvs}[9]
   {\setlength{\temparraycolsep}{\arraycolsep}%
    \setlength{\arraycolsep}{1.5pt}%
        \left (%
    \begin{array}{ccc}%
                #1  & \blcligne{#2} & #3\\
            \blcvecteur{#4} & \blcmatrice{#5} & \blcvecteur{#6}\\
                #7  & \blcligne{#8} & #9\\
        \end{array}%
        \right )%
    \setlength{\arraycolsep}{\temparraycolsep}%
   }%
\newcommand{\vecteurtblvs}[3]%
   {\setlength{\temparraycolsep}{\arraycolsep}%
    \setlength{\arraycolsep}{1.5pt}%
        \left (%
    \begin{array}{c}%
                #1  \\
            \blcvecteur{#2}\\
                #3
        \end{array}%
        \right )%
    \setlength{\arraycolsep}{\temparraycolsep}%
   }%
\newcommand{\lignetblvs}[3]%
   {\setlength{\temparraycolsep}{\arraycolsep}%
    \setlength{\arraycolsep}{1.5pt}%
        \left (%
    \begin{array}{ccc}%
                #1  & \blcligne{#2} & #3
        \end{array}%
        \right )%
    \setlength{\arraycolsep}{\temparraycolsep}%
   }%
\newcommand{\matricettblblvs}[9]
   {\setlength{\temparraycolsep}{\arraycolsep}%
    \setlength{\arraycolsep}{1.5pt}%
        \left (%
    \begin{array}{ccc}%
                #1  & \blcligne{#2} & \blcligne{#3}\\
            \blcvecteur{#4} & \blcmatrice{#5} & \blcmatrice{#6}\\
                \blcvecteur{#7}  & \blcmatrice{#8} & \blcmatrice{#9}\\
        \end{array}%
        \right )%
    \setlength{\arraycolsep}{\temparraycolsep}%
   }%
\newcommand{\vecteurtblblvs}[3]%
   {\setlength{\temparraycolsep}{\arraycolsep}%
    \setlength{\arraycolsep}{1.5pt}%
        \left (%
    \begin{array}{c}%
                #1  \\
            \blcvecteur{#2}\\
                \blcvecteur{#3}
        \end{array}%
        \right )%
    \setlength{\arraycolsep}{\temparraycolsep}%
   }%
\newcommand{\lignetblblvs}[3]%
   {\setlength{\temparraycolsep}{\arraycolsep}%
    \setlength{\arraycolsep}{1.5pt}%
        \left (%
    \begin{array}{ccc}%
                #1  & \blcligne{#2} & \blcligne{#3}
        \end{array}%
        \right )%
    \setlength{\arraycolsep}{\temparraycolsep}%
   }%
\newcommand{\PushLine}{\hbox{}\hfill\hbox{}}
\newlength{\DefiTest}\setlength{\DefiTest}{0pt}%
\newlength{\DefiHeightu}\newlength{\DefiHeightd}%
\newlength{\DefiDepthu}\newlength{\DefiDepthd}%
\newcommand{\Defi}[2]%
    {%
     \settoheight{\DefiHeightu}{${\displaystyle #1}$}%
     \settodepth{\DefiDepthu}{${\displaystyle #1}$}%
     \addtolength{\DefiHeightu}{\DefiDepthu}%
     \settoheight{\DefiHeightd}{${\displaystyle #2}$}%
     \settodepth{\DefiDepthd}{${\displaystyle #2}$}%
     \addtolength{\DefiHeightd}{\DefiDepthd}%
     \left\{#1%
     \rule[-\DefiDepthd]{\DefiTest}{\DefiHeightd}%
     \xmd\right|%
     \left.%
     \rule[-\DefiDepthu]{\DefiTest}{\DefiHeightu}%
      #2\right\}%
     }
\newlength{\ColoText}
\newlength{\ColoFigu}
\newlength{\TextFiguSpace}
\newlength{\parindenttemp} 
\newlength{\parskiptemp} 
\newlength{\fboxseptemp} 
\newcommand{\TFBoxing}{}
\newcommand{\TFVertAlig}{}
\newcommand{\LeftLarg}{}
\renewcommand{\LeftLarg}{.66}
\ifdraft\renewcommand{\TFBoxing}{\fbox}\fi
\newcommand{\TxtFg}[3]%
   {%
    \setlength{\ColoText}{#1\textwidth}%
    \setlength{\ColoFigu}{\textwidth}%
    \addtolength{\ColoFigu}{-\ColoText}%
    \addtolength{\ColoText}{-.5\TextFiguSpace}%
    \addtolength{\ColoFigu}{-.5\TextFiguSpace}%
    \ifdraft\setlength{\fboxsep}{0pt}\fi
    \noi
    \TFBoxing{%
       \begin{minipage}[\TFVertAlig]{\ColoText}%
          \setlength{\parindent}{\parindenttemp}%
          \setlength{\parskip}{\parskiptemp}%
          \par\vspace*{0mm}
             #2
       \end{minipage}%
             }%
    \hspace*{\TextFiguSpace}%
    \TFBoxing{%
       \begin{minipage}[\TFVertAlig]{\ColoFigu}%
          \par\vspace*{0mm}%
             #3%
       \end{minipage}%
             }%
    \ifdraft\setlength{\fboxsep}{\fboxseptemp}\fi
   }%
\newcommand{\TextFigu}[3][\LeftLarg]%
   {\renewcommand{\TFVertAlig}{t}\TxtFg{#1}{#2}{#3}}
\newcommand{\TextFiguC}[3][\LeftLarg]%
   {\renewcommand{\TFVertAlig}{c}\TxtFg{#1}{#2}{#3}}
\newcommand{\TextFiguX}[3][\LeftLarg]
   {%
    \setlength{\ColoText}{#1\textwidth}%
    \setlength{\ColoFigu}{\textwidth}%
    \addtolength{\ColoFigu}{-\ColoText}%
    \addtolength{\ColoText}{-.5\TextFiguSpace}%
    \addtolength{\ColoFigu}{-.5\TextFiguSpace}%
    \addtolength{\ColoFigu}{\ETAExtendedLineWidth}
    \ifdraft\setlength{\fboxsep}{0pt}\fi
    \noi
    \ifodd\value{page}%
       \TFBoxing{%
          \begin{minipage}[t]{\ColoText}%
             \RstBLS
             \setlength{\parindent}{\parindenttemp}%
             \setlength{\parskip}{\parskiptemp}%
             \par\vspace*{0mm}
                #2
          \end{minipage}%
                }%
       \hspace*{\TextFiguSpace}%
       \TFBoxing{%
          \begin{minipage}[t]{\ColoFigu}%
             \par\vspace*{0mm}%
                #3%
          \end{minipage}%
                }%
    \else
       \hspace*{-\ETAExtendedLineWidth}
       \TFBoxing{%
          \begin{minipage}[t]{\ColoFigu}%
             \par\vspace*{0mm}%
                #3%
          \end{minipage}%
                }%
       \hspace*{\TextFiguSpace}%
       \TFBoxing{%
          \begin{minipage}[t]{\ColoText}%
             \RstBLS
             \setlength{\parindent}{\parindenttemp}%
             \setlength{\parskip}{\parskiptemp}%
             \par\vspace*{0mm}
                #2
          \end{minipage}%
                }%
    \fi%
    \ifdraft\setlength{\fboxsep}{\fboxseptemp}\fi
   }
\newcommand{\Axio}[1]%
   {\pointn #1\hspace*{.1em}\jspointtiret\hspace*{.4em}\ignorespaces}
\newcommand{\jsTerval}[1]{[{#1}]} 
\newcommand{\pterval}{\jsTerval{p}}
\newcommand{\jsCard}[1]{{\|{#1}\|}}
\newcommand{\ExtnF}[1]%
   {\overset{{\scriptscriptstyle \pmb{\smile}}}{#1}}
\newcommand{\DiffF}[1]%
   {\overset{{\scriptscriptstyle \pmb{\lor}}}{#1}}
\newcommand{\LocaF}[1]%
   {\overset{{\scriptscriptstyle \leftrightarrow}}{#1}}
\newcommand{\jsDist}[2][{}]%
   {\operatorname{\mathbf{d}_{#1}}\left(#2\right)}
\renewcommand{\lim}{{\operatornamewithlimits{\mathsf{lim}}}}
\newcommand{\StruSA}[1]{\aut{#1}}
\newcommand{\x}{\! \times \!}
\newcommand{\SerSAnMon}[2]%
    {#1 \langle \! \langle  #2  \rangle \! \rangle }
\newcommand{\SerSAnMonD}[2]%
    {\left[#1\right] \langle \! \langle  #2  \rangle \! \rangle }
\newcommand{\SerMon}[1]%
    {\!\langle \! \langle  #1  \rangle \! \rangle }
\newcommand{\PolSAnMon}[2]%
    {{#1 \langle  #2 \rangle }}
\newcommand{\PolMon}[1]%
    {{\!\langle  #1 \rangle }}
\newsavebox{\LeftBraket}
\savebox{\LeftBraket}{\scalebox{0.7 1.2}{$<$}}
\newsavebox{\RightBraket}
\savebox{\RightBraket}{\scalebox{0.7 1.2}{$>$}}
\newcommand{\Rec}{\mathrm{Rec}\,}
\newcommand{\Rat}{\mathrm{Rat}\,}
\newcommand{\jsStar}[1]{{{#1}^{*}}}
\newcommand{\Ae}{\jsStar{A}}
\newcommand{\iotaK}{\iota_{\ShiftInd{K}}}
\newcommand{\compos}{\ccdot }
\newcommand{\phiikpsi}%
{{\varphi ^{-1}\! \compos        \iotaK \! \compos \! \psi }}
\newcommand{\phiiotpsi}[1]%
{{\varphi ^{-1}\! \compos        \iota _{\ShiftInd{#1}} \! \compos \! \psi }}
\newcommand{\phiintkpsi}[1]%
{{(#1\varphi ^{-1}\! \cap K) \psi }}
\newcommand{\jsgeq}{\geqslant }
\newcommand{\jsless}
   {\mathrel{\leqslant_{\!\!\!\!\scriptscriptstyle{/}}}}
\newcommand{\jsgrea}
   {\mathrel{\geqslant_{\!\!\!\!\scriptscriptstyle{\backslash}}}}
\newcommand{\lexiconeq}
   {\preccurlyeq_{\!\!\!\!\!\scalebox{1.8 1}{\scriptscriptstyle{\pmb{/}}}}}
\newcommand{\jsAutUn}[1]%
   {\mbox{$\left\langle \thinspace #1 \thinspace \right\rangle $}}
\newcommand{\aut}[1]{\jsAutUn{#1}} 
\newcommand{\auta}{\jsAutUn{Q,A,E,I,T}}
\newcommand{\ShiftInd}[1]{\raisebox{-0.3ex}{$\scriptstyle{#1}$}}
\newcommand{\act}{\mathbin{\boldsymbol{\cdot}}}
\newcommand{\actb}{\mathbin{\raisebox{0.2ex}%
                        {${\scriptscriptstyle \circ} $}}}
\newcommand{\ccdot}{\actb} 
\newlength{\vbh}\newlength{\vbd}\newlength{\vbt}%
\newcommand{\CompAuto}[1]%
    {%
     \settodepth{\vbd}{\mbox{$\displaystyle{#1\strut}$}}%
     \settoheight{\vbh}{\mbox{$\displaystyle{#1\strut}$}}%
     \setlength{\vbt}{\vbh}\addtolength{\vbt}{\vbd}%
     {}%
     \psline[linewidth=0.8pt]{c-c}(0,-.65\vbd)(0,.9\vbh)%
     \hspace*{0.7pt}%
     {#1}%
     \kern0.8pt%
     \psline[linewidth=0.8pt]{c-c}(0,-.65\vbd)(0,.9\vbh)%
     }%
\newcommand{\compA}{\CompAuto{\Ac}}
\newcommand{\bornedeuxlignes}[2]%
{\mbox{$
\begin{array}{c}{\scriptstyle #1}\\ {\scriptstyle #2} \end{array}
       $}}
\newcommand{\pathaut}[2]{\underset{#2}{\path{#1}}}
\newcommand{\ExpDer}[2][a]%
    {\operatorname{\frac{\partial}{\partial \mbox{$#1$}}}#2}
\newcommand{\ExpDerP}[2][a]%
    {\operatorname{\frac{\partial}{\partial\mbox{$#1$}}}\left(#2\right)}
\newcommand{\ExpDerr}[2][a]%
    {\operatorname{\frac{\partial_{\mathrm{R}}}{\partial \mbox{$#1$}}}#2}
\newcommand{\ExpDerB}[2][a]%
   {\operatorname{\frac{\partial_\mathsf{b}}{\partial \mbox{$#1$}}}#2}
\newcommand{\ExpDerBP}[2][a]%
   {\operatorname{\frac{\partial_\mathsf{b}}{\partial \mbox{$#1$}}}\left(#2\right)}
\renewcommand{\CompAuto}[1]{{\pmb{|}{#1}\pmb{|}}}
\renewcommand{\x}{\xmd\! \times \!\xmd}
\newcommand{\SL}{\Sc_{L}}
\newcommand{\Cond}[1]{\Cc_{#1}}
\newcommand{\ASNu}[1]{\val{\CompAuto{#1}}}
\renewcommand{\div}{\!\mid\!}
\newcommand{\notdiv}{\!\mathrel{\mid\!\!\!\!/}\!}
\newcommand{\MinusOne}[1]{{#1\!-\!\scalebox{.9}{1}}}
\newcommand{\PlusOne}[1]{#1\!\scalebox{.9}{+\!1}}
\newcommand{\bmo}{\MinusOne{b}}
\newcommand{\qmo}{\MinusOne{q}}
\newcommand{\npo}{\PlusOne{n}}
\renewcommand{\jsTerval}[1]{\{0,1,\ldots,\MinusOne{#1}\}} 
\newcommand{\Ab}{A_{b}}
\newcommand{\IdxdStar}[1]{{{#1}^{\!*}}}
\newcommand{\Abe}{\IdxdStar{\Ab}}
\newcommand{\Nd}{\N^{d}}
\newcommand{\brec}{$b$-recognisable\xspace}
\newcommand{\ifof}{if, and only if,\xspace}
\newcommand{\toscc}{Type~1 SCC\xspace}
\newcommand{\ttscc}{Type~2 SCC\xspace}
\newcommand{\up}{UP\xspace}
\newcommand{\upau}{UP-automaton\xspace}
\newcommand{\upcr}{UP-criterion\xspace}
\newcommand{\upsn}{UP-set of numbers\xspace}
\newcommand{\upssn}{UP-sets of numbers\xspace}
\newcommand{\uplbl}[1]{(\textsf{UP}-{#1})\xspace}
\newcommand{\upo}{\uplbl{1}}
\newcommand{\upt}{\uplbl{2}}
\newcommand{\uptr}{\uplbl{3}}
\newcommand{\upf}{\uplbl{4}}
\newcommand{\PSN}[2]{E_{#2}^{#1}}
\newcommand{\UPSN}[3]{E_{#2,#3}^{#1}}
\newcommand{\EpR}{\PSN{R}{p}}
\newcommand{\EpRm}{\UPSN{R}{p}{m}}
\newcommand{\Pasc}[2]{\Pc_{#2}^{#1}}
\newcommand{\PpR}{\Pasc{R}{p}}
\newcommand{\Pascp}[2]{{\Pc'}_{#2}^{#1}}
\newcommand{\PpRp}{\Pascp{R}{p}}
\newcommand{\ApR}{\PpR}
\newcommand{\PSA}[2]{\Bc_{#2}^{#1}}
\newcommand{\UPSA}[3]{\Bc_{#2,#3}^{#1}}
\newcommand{\GA}[1]{\Dc_{#1}}
\newcommand{\GPasc}[2]{\Gc_{#2}^{#1}}
\newcommand{\GpR}{\GPasc{R}{p}}
\newcommand{\sdp}{\xmd\! \rtimes \!\xmd}
\newcommand{\ZpZ}{\Z/p\Z}
\newcommand{\Zppsi}{\Z/p\Z\x\Z/\psi\Z}
\newcommand{\Zpsdpsi}{\Z/p\Z\sdp\Z/\psi\Z}
\newcommand{\RZpsi}{R\x\Z/\psi\Z}
\newcommand{\zz}{(0,0)}
\newcommand{\grg}{g} 
\newcommand{\gprod}{\xmd} 
\newcommand{\sdpG}[1]{G_{#1}} 
\newcommand{\sdpgp}{\sdpG{p}} 
\newcommand{\PermElF}[2]{{#1}_{#2}} 
\newcommand{\PermEl}[3]{\PermElF{#1}{#2}(#3)} 
\newcommand{\IndPer}[2]{\PermEl{\tau}{#1}{#2}} 
\newcommand{\ipst}[1]{\IndPer{(s,t)}{#1}} 
\newcommand{\ipsti}[1]{\PermElF{\tau}{(s,t)}^{-1}{#1}} 
\newcommand{\ipstf}{\PermElF{\tau}{(s,t)}} %
\newcommand{\IndPerT}[2]{\PermEl{\sigma}{#1}{#2}} 
\newcommand{\iptst}[1]{\IndPerT{(s,t)}{#1}} 
\newcommand{\iptstf}{\PermElF{\sigma}{(s,t)}} %
\newcommand{\Modu}[3]{#1\equiv#2\msp[#3]}
\newcommand{\mdp}[2]{\Modu{#1}{#2}{p}}
\newcommand{\mdk}[2]{\Modu{#1}{#2}{k}}
\newcommand{\mdd}[2]{\Modu{#1}{#2}{d}}
\newcommand{\equnm}[1]{(\ref{q.#1})\xspace}
\newcommand{\figur}[1]{Fig.\xmd\ref{f.#1}}
\newcommand{\propo}[1]{Proposition~\ref{p.#1}}
\newcommand{\theor}[1]{Theorem~\ref{t.#1}}
\newcommand{\secti}[1]{Sect.\xmd\ref{s.#1}}
\newcommand{\Rep}[1]{\langle#1\rangle}
\newcommand{\quot}[1]{\widehat{#1}}
\newcommand{\gentwo}{10^{\uminus 1}}
\newcommand{\pascal}[3]{\Pasc{#1}{#2}}
\newcommand{\behavement}[1]{|#1|}
\newcommand\uminus{%
  \setbox0=\hbox{-}%
  \vcenter{%
    \hrule width\wd0 height \the\fontdimen8\textfont3%
    \vspace*{0.1em}
  }%
}
\newcommand{\abs}[1]{|#1|}
\def\curlang{en}
\ShowGrid\setlength{\VCGridLineWidth}{2pt}\fi
\begin{document}
\maketitle
\thispagestyle{plain}
\begin{abstract}
  It is decidable if a set of numbers, whose representation in a base~$b$ is a
    regular language, is ultimately periodic.
  This was established by Honkala in 1986.

  We give here a structural description of minimal automata that accept an
    ultimately periodic set of numbers.
  We then show that it can be verified in linear time if a given minimal
  automaton meets this description.

  This yields a $O(n\xmd $log$(n))$ procedure for deciding whether a
    general deterministic automaton accepts an ultimately periodic set of
    numbers.
\end{abstract}

\section{Introduction}

Given a fixed positive integer~$b$, called the \emph{base}, every
positive integer~$n$ is represented (in base~$b$) by a \emph{word}
over the digit alphabet
$\msp\Ab\nlb=\nlb\{0,1,\ldots,\bmo\}\msp$
which does not start with a~$0$.
Hence, \emph{sets} of numbers are represented by \emph{languages}
of~$\Abe$.
Depending on the base, a given set of integers may be represented by
a simple or complex language:
the set of powers of~$2$ is represented by the rational
language~$10^*$ in base~$2$; whereas in base~$3$, it can only be represented
by a context-sensitive
language, much harder to describe.

A set of numbers is said to be \emph{\brec}if it is
represented by a recognisable, or rational, or regular, language
over~$\Abe$.
On the other hand, a set of numbers is \emph{recognisable} if it is,
via the identification of~$\N$ with~$a^{*}$
($n \leftrightarrow a^{n}$),
a recognisable, or rational, or regular, language of the free
monoid~$a^{*}$.
A set of numbers is recognisable if, and only if it is
\emph{ultimately periodic} (\up) and we use the latter terminology in
the sequel
as it is both meaningful and more distinguishable from \brec.
It is common knowledge that every \upsn is \brec for
every~$b$, and the above example shows that a~{$b$-recognisable} set for
some~$b$ is not necessarily UP, nor~$c$-recognisable for all~$c$.
It is an exercice to show that if~$b$ and~$c$ are \emph{multiplicatively
dependent} integers (that is, there exist~integers~$k$ and~$l$ such that
$b^{k}=c^{l}$), then every \brec set is a~$c$-recognisable set as
well (\cf \cite{FrouSaka10hb} for instance).
A converse of these two properties is the theorem of Cobham:
\emph{a set of numbers which is both~$b$- and~$c$-recognisable, for
multiplicatively independent~$b$ and~$c$, is UP},
established in~1969~\cite{Cobh69}, a strong and deep result whose
proof is difficult (\cf\cite{BruyEtAl94}).

After Cobham's theorem, the next natural (and last) question left
open on~\brec sets of numbers was the decidability of ultimate periodicity.
It was positively solved in~$1986$:

\begin{theorem}[Honkala~\cite{Honk86}]
\label{t.hon}%
It is decidable whether an automaton over~$\Abe$ accepts a \upsn.
\end{theorem}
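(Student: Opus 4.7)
The plan is to reduce decidability of ultimate periodicity to \emph{equivalence testing of automata} over~$\Abe$, after bounding the possible period and threshold effectively in terms of the size of the input automaton~$\mathcal{A}$.

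Observe first that a \upsn $S$ is fully encoded by a triple $(p, t, F)$ consisting of a period, a threshold, and the finite initial segment $F = S \cap [0,\, t+p)$. For every such triple one can effectively construct a deterministic automaton over~$\Abe$ accepting the corresponding \upsn, of size polynomial in $p$ and~$t$, by taking the product of a ``counter modulo~$p$'' automaton with a ``threshold~$t$'' automaton. Since equivalence of deterministic automata over~$\Abe$ is decidable in polynomial time, the problem reduces to the following task: compute a bound $B = B(|\mathcal{A}|)$ such that, whenever~$\mathcal{A}$ accepts a \upsn, there exist witnesses~$(p,t)$ with $p, t \leq B$; then enumerate the finitely many triples $(p,t,F)$ within that bound and test equivalence for each. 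The answer is ``yes'' \ifof at least one triple yields an automaton equivalent to~$\mathcal{A}$.

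The main obstacle is establishing the bound~$B$. I would attempt a pumping argument on the minimal automaton of $\mathcal{L}(\mathcal{A})$: if~$S$ is \up with period~$p$, then for all sufficiently large~$n$ the representations of~$n$ and of~$n+p$ must lead to states with the same acceptance behaviour; pumping a loop encountered along such a representation then produces an infinite family of forced equalities, from which $p$ should be bounded by a quantity computable from the cycle lengths of the automaton, while $t$ is bounded by its diameter together with the pre-period. The delicate point is that~$b$-adic addition by~$p$ propagates carries across arbitrarily long distances, so reading the representation of~$n+p$ in~$\mathcal{A}$ is not a local operation on the representation of~$n$. One is therefore led to work not with~$\mathcal{A}$ directly but with a richer object -- its transition monoid, or the product of~$\mathcal{A}$ with an adder-transducer -- where the ``add~$p$'' action becomes representable, and to extract the bound on~$p$ from the algebraic structure of that object.
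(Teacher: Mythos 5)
Your overall strategy --- bound the possible period and threshold effectively in terms of the size of~$\mathcal{A}$, enumerate the finitely many candidate \upssn within the bound, build a small automaton for each, and test equivalence --- is precisely Honkala's original argument, which this paper cites for \theor{hon} but deliberately does not follow: the paper instead proves the stronger \theor{upc-cha} and \theor{upc-lin}, characterising the minimal automata of \upssn structurally (the \upcr) and checking that structure directly, which is what buys the quasilinear complexity. Your route is therefore genuinely different from the paper's, and it is a viable one; the reduction to equivalence testing and the construction of an automaton from a triple $(p,t,F)$ are unproblematic.

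The gap is that the one step carrying all the mathematical content --- the effective bound~$B$ --- is never established. You write that a pumping argument ``should'' bound~$p$ by a quantity computable from the cycle lengths, and then you yourself identify exactly why the naive version fails (adding~$p$ is not a local operation on $b$-ary representations because of carry propagation), proposing to pass to the transition monoid or to a product with an adder transducer without saying what bound comes out of that object or why. As written, decidability is not proved: every other step is routine, and this one is the theorem. The bound does exist and can be extracted along the lines the paper's own analysis suggests: writing the minimal period as $p = k\xmd d$ with~$k$ coprime to~$b$ and~$d$ dividing a power~$b^{j}$ of the base, the coprime part~$k$ appears as the length of a circuit inside a quotient of a Pascal automaton sitting in the minimal DFA (\corol{can-par}), hence~$k$ is at most the number of states~$n$, while~$d$ and the threshold are bounded by~$b$ raised to the depth of the acyclic part, hence by~$b^{n}$. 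Some such explicit derivation must be supplied before the enumeration step makes sense; until then your argument is a correct plan rather than a proof.
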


The complexity of the decision procedure was not
an issue in the original work. 
Neither were the properties or the structure of automata accepting
\upsn.
Given an automaton~$\Ac$ over~$\Abe$, bounds are computed on the parameters 
of a potential \upsn accepted by~$\Ac$.
The property is then decidable as it is possible to enumerate all
automata that accept sets with smaller parameters and check whether any of them is equivalent to~$\Ac$.

As explained below, subsequent works on automata and number
representations brought some answers regarding the complexity
of the decision procedure, explicitly or implicitly.
The  present paper addresses specifically this problem and yields the
following statement.

\begin{theorem}
\label{t.com-plx}%
It is decidable \emph{in linear time} whether a minimal DFA~$\Ac$
over~$\Abe$ accepts a \upsn.
\end{theorem}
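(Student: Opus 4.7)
The plan is to isolate a structural criterion---call it the \upcr, consisting of four local clauses---that characterises precisely those minimal DFAs over~$\Abe$ accepting a \upsn, and then to observe that each clause can be tested with a constant amount of work per state and per edge of the input automaton.

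The starting observation is that a \upsn~$E$ can be written, up to a finite symmetric difference, as $R+p\N$ with $R\subseteq\{0,1,\ldots,\pmo\}$. After reading enough high-order digits of a word~$w$, the only information relevant to acceptance is $\val{w}\bmod p$. Hence, in the minimal DFA~$\Ac$, every state reachable by arbitrarily long words must lie in a strongly connected component on which each digit $a\in\Ab$ acts as a bijection, since the map $n\mapsto bn+a$ is a bijection on~$\Z/p\Z$. This suggests distinguishing two kinds of SCCs: Type~1 SCCs, which carry such a permutation-like action and encode the modular periodic behaviour of~$E$, and Type~2 SCCs, which are transient configurations accounting for only finitely many exceptional values.

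With this dichotomy in mind, I would phrase the clauses \upo through~\upf roughly as follows: \upo every SCC of~$\Ac$ reached by arbitrarily long words is a Type~1 SCC; \upt every digit either keeps the computation inside a given Type~1 SCC or leaves it in a controlled way compatible with the finite exception tail; \uptr the cycle lengths attached to the various Type~1 SCCs satisfy a compatibility condition so that a single global period~$p$ can be consistently assigned; \upf the transient portion of~$\Ac$ preceding the Type~1 SCCs corresponds to a finite exception set. Necessity is proved by starting from a \upsn $E=F\cup(R+p\N)$, building its Myhill--Nerode automaton explicitly, and reading off each clause from the definition of~$E$. Sufficiency is constructive: from a DFA satisfying the four clauses one extracts an integer~$p$ (essentially the l.c.m.\ of the local periods of the Type~1 SCCs) and a set~$R$ so that the accepted language is a \upsn of period~$p$ up to a finite set drawn from the transient part.

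For the algorithmic part, the SCC decomposition is obtained in linear time (Tarjan), and the condensation DAG is traversed once. Inside each SCC, checking that every letter acts as a permutation amounts to verifying that each letter induces a matching between states, done by a single sweep over the incident edges; the inter-SCC clauses are then checked by iterating over the remaining edges. In total the test runs in $O(\abs{Q}\, b)$ time, linear in the size of~$\Ac$. The main obstacle I anticipate is phrasing clause~\uptr correctly: the global period~$p$ is an l.c.m.\ of quantities attached to distinct Type~1 SCCs, and expressing compatibility locally---without having to compute~$p$ explicitly and then test congruences, which might already exceed linear time---is subtle. The condition has to be captured purely by transitions and small indices within each component, sharp enough to imply sufficiency yet cheap enough to remain linear-time verifiable, and this is where I expect the bulk of the technical work.
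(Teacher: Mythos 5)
Your overall architecture --- a structural criterion on the SCC decomposition of the minimal DFA, checked in linear time after Tarjan's algorithm --- is the same as the paper's, but the criterion you propose is too weak at the decisive point, and the step you flag as the main difficulty is not where the difficulty actually lies. Your Type~1 condition asks only that each digit act as a permutation on the terminal SCCs, i.e.\ that these SCCs be group automata. That is necessary but far from sufficient: the automaton on states $\{0,1,2\}$ over $\Ab=\{0,1\}$ in which the digit~$0$ acts as the identity and the digit~$1$ as a $3$-cycle is a minimal, strongly connected group automaton, yet (with state~$0$ initial and final) it accepts exactly the integers whose binary expansion has a number of $1$'s divisible by~$3$ --- a set which is not ultimately periodic. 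The paper's corresponding clause (UP-3) requires each terminal SCC to be a \emph{quotient of a Pascal automaton}~$\Pasc{R}{p}$, with state set $\Z/p\Z\times\Z/\psi\Z$ where $\psi$ is the multiplicative order of~$b$ modulo~$p$; and the bulk of the technical work (all of Section~2) is the linear-time recognition of such quotients: reading the parameters~$p$, $R$ and the class of the initial state off the candidate automaton, then verifying a marking by a single traversal. None of this appears in your proposal, and without it sufficiency fails.

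Two further points. Your necessity argument rests on ``$n\mapsto bn+a$ is a bijection on $\Z/p\Z$'', which holds only when $p$ is coprime with~$b$; for periods sharing a factor with the base (multiples of~$4$ in base~$2$, say) the terminal behaviour is not a permutation action on residues, and the paper must split $p=k\,d$ with $k$ coprime to~$b$ and $d$ dividing~$b^j$ via the Chinese remainder theorem, yielding a $b$-tree of depth~$j$ feeding into generalised Pascal automata. Finally, the compatibility-of-periods clause \uptr that you expect to carry the bulk of the work is unnecessary: distinct terminal SCCs may carry unrelated periods, because a finite union of ultimately periodic sets is ultimately periodic, so the global l.c.m.\ never needs to be computed or tested. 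The genuine obstacle is the one your proposal omits, namely deciding in linear time whether a given minimal group automaton is a quotient of some Pascal automaton.
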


As it is often the case, this complexity result is obtained as the consequence
of a structural characterisation.
Indeed, we describe here a set of structural properties for an
automaton: the shape of its strongly connected components (SCC's) and
that of
its graph of SCC's, that we gather under the name of \upcr.
\theor{com-plx} then splits into two results:

\begin{theorem}
\label{t.upc-cha}%
A minimal DFA~$\Ac$
over~$\Abe$ accepts a \upsn
if, and only if, it
satisfies the \upcr.
\end{theorem}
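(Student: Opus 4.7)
The plan is to prove the two implications separately, and I expect the sufficiency direction to be the easier one.

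For the direction ``\upcr implies \up'', I would assume $\Ac$ satisfies the \upcr and read off from the structural description of its SCCs and of its SCC-graph a closed form for the set of numbers $L(\Ac)$ accepts. The shape of each SCC should encode a modular arithmetic structure---a group action of the digit alphabet on $\ZpZ$ for some period~$p$, where reading a digit $a$ acts by the affine map $x \mapsto bx+a$---while the shape of the SCC-graph prescribes which residues are accepted and how the transient part of $\Ac$ funnels into the recurrent part. Identifying $L(\Ac)$ as a set of the form $F \cup \{n \geq N : n \bmod p \in S\}$ for some finite $F$, threshold $N$, period $p$, and residue set $S \subseteq \ZpZ$, and hence as a \upsn, should then be a direct verification going component by component.

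For the direction ``\up implies \upcr'', I would rely on the Myhill--Nerode equivalence induced by a \upsn. Concretely, if $E = L(\Ac)$ is \up with period $p$ and threshold $N$, then for any two words $u,v \in \Abe$ with $\val{u}, \val{v} \geq N$ and $\val{u} \equiv \val{v} \pmod{p}$, the right-languages of $u$ and $v$ in $E$ coincide, so $u$ and $v$ reach the same state of the minimal $\Ac$. This strong state-identification bounds how many states can sit on a long loop in $\Ac$, forcing each SCC to have a cyclic shape controlled by $p$. For the global picture, I would stratify states of $\Ac$ by the minimum value $\val{w}$ of a word $w$ reaching them: states reachable only by words of value below $N$ populate the transient part, while those reachable by arbitrarily large values form the recurrent part indexed by residues modulo (divisors of) $p$. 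The distinction between \toscc and \ttscc should emerge naturally from this stratification, according to whether the residue class already carries all the information needed for acceptance or still depends on additional high-order digits.

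The main obstacle, I expect, is this second direction, specifically pinning down the precise shape of the SCC-graph rather than just the internal structure of each SCC in isolation. While cyclicity inside an SCC follows from modular arithmetic alone, the way transient SCCs attach to the recurrent part is delicate: a path can traverse states of small value for a while before the accumulated value exceeds $N$, and distinct paths sharing the same residue may still visit distinct SCCs along the way. Handling this carefully, and in particular distinguishing the role of the digit $0$ (which multiplies the value by $b$ without adding a shift) from that of non-zero digits, will require a case analysis to ensure that the SCC-graph decomposes exactly as the \upcr prescribes, and not in some strictly richer fashion which would merely imply \up without being equivalent to it.
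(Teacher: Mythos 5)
Your two-direction plan matches the shape of the theorem, and your sufficiency sketch is essentially the paper's argument (restrict to a branch of the condensation, observe that the Type~1 SCC reached accepts a periodic set shifted and scaled by the access word, take a finite union). The necessity direction, however, contains a genuine error. Your key state-identification claim --- that two words $u,v$ with $\val{u},\val{v}\geq N$ and $\val{u}\equiv\val{v}\pmod p$ have the same right-language and hence reach the same state of the minimal DFA --- is false. The paper reads representations \emph{least significant digit first}, so $\val{u\xmd w}=\val{u}+b^{\wlen{u}}\val{w}$, and the residue of $\val{u\xmd w}$ modulo $p$ depends on $b^{\wlen{u}}\bmod p$ as well as on $\val{u}\bmod p$. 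Concretely, for $b=2$ and $E=\{n : n\equiv 2\ [3]\}$, the words $u=1$ and $u'=10$ both have value $1$, yet $u\xmd w$ represents an element of $E$ \ifof $\val{w}\equiv 2\ [3]$, whereas $u'\xmd w$ does so \ifof $\val{w}\equiv 1\ [3]$; these two words reach distinct states. The correct invariant is the pair $(\val{u}\bmod p,\ \wlen{u}\bmod\psi)$, which is precisely why the Pascal automaton has state set $\Zppsi$ rather than $\ZpZ$ and transition monoid a semidirect product rather than a cyclic group; your description of the digit action as $x\mapsto bx+a$ is the most-significant-digit-first action and does not match the automata that condition (UP-3) refers to. Even with this repaired, ``cyclic shape controlled by $p$'' falls short of what must be proved: (UP-3) requires each leaf SCC to be a quotient of a Pascal automaton, and (UP-4) requires each $0$-circuit SCC to embed into its successor leaf SCC.

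The paper takes a genuinely different route for necessity that sidesteps any direct analysis of the minimal automaton: it (i) explicitly constructs, for every \upsn, \emph{some} \upau accepting it (a Pascal automaton for the part of the period coprime to $b$, a $b$-ary tree for the part dividing a power of $b$, and a product with a threshold automaton), and (ii) proves that the \upcr is stable under quotient, via a lemma showing that every SCC of a quotient is the image of an SCC of the source. Since the minimal DFA is a quotient of the constructed automaton, it satisfies the criterion. If you wish to keep your direct Myhill--Nerode approach, you would have to carry the length-modulo-$\psi$ component throughout and then re-derive, inside the minimal automaton, the orbit structure of the semidirect-product action --- essentially redoing the paper's analysis of quotients of Pascal automata by hand.
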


\begin{theorem}
\label{t.upc-lin}%
It is decidable {in linear time} whether a minimal DFA~$\Ac$
over~$\Abe$ satisfies the \upcr.
\end{theorem}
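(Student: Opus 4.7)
The plan is to reduce the verification of the \upcr to a sequence of linear-time subroutines that operate, in turn, on the whole state graph of~$\Ac$, on each of its SCCs individually, and on the DAG of SCCs. First I would compute the SCC decomposition of~$\Ac$ using Tarjan's algorithm. Since~$\Ac$ is a DFA over the fixed alphabet~$\Ab$, the out-degree of every state is at most~$b$, a constant; hence the number of transitions is~$O(n)$ where~$n$ is the number of states, and Tarjan's procedure runs in time $O(n)$. The same traversal yields both the partition of states into SCCs and the condensation graph.

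Next, for each SCC~$\Cc$ I would test whether~$\Cc$ matches one of the admissible shapes prescribed by the \upcr, that is, whether it is a \toscc or a \ttscc. Because each admissible shape is specified by a local structural pattern (a distinguished cycle together with a rigid arrangement of the remaining transitions), the test can be implemented as a single depth- or breadth-first traversal of~$\Cc$ starting from a canonically chosen state, checking at each step a constant number of outgoing and incoming transitions against the expected pattern. Since the states and transitions within~$\Cc$ are visited only a constant number of times, the cost of testing~$\Cc$ is proportional to its size; summed over all SCCs this second phase is again~$O(n)$.

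Finally, I would verify the global conditions imposed by the \upcr on the condensation DAG: which types of SCC are allowed to appear, how they may be connected to one another, and which ones are permitted to carry the initial state or final states. The condensation has fewer vertices than~$\Ac$ and at most as many edges, so a single pass through it is linear.

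The main obstacle is to ensure that the per-SCC check in the second phase really runs in time proportional to~$|\Cc|$ alone, rather than to~$|\Ac|$. This requires exhibiting for each SCC a canonical entry state (for instance, the one of smallest numerical representative, or a vertex singled out by the cycle structure) and then orienting the traversal so that the expected pattern unfolds deterministically. Once this local canonical form is available, every local constraint of the \upcr reduces to a constant-time verification, and the additivity of the three phases delivers the claimed linear-time bound.
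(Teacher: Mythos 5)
Your three-phase architecture (Tarjan's SCC decomposition, per-SCC checks, a pass over the condensation) matches the routine part of the paper's argument, and your observation that the alphabet is fixed, so the number of transitions is $O(n)$, is exactly the paper's starting point. The genuine gap is in your second phase: condition (UP-3) requires deciding whether each \toscc is the quotient of a Pascal automaton $\Pasc{R}{p}$ for \emph{some} unknown $p$ and $R$, and this is not a ``local structural pattern'' that a single traversal can match, because neither the target automaton nor the quotient morphism is given --- both must be synthesized from the candidate SCC. The paper explicitly calls this ``the tricky part'' of achieving linear complexity and devotes all of \secti{pas-cal} to it. You also misidentify the obstacle: it is not the choice of a canonical entry state, but the reconstruction of the parameters and the proof that a constant-work-per-state verification of the quotient relation exists at all.

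Concretely, the missing ingredients are: (i) check that the SCC is a group automaton and that the action of each digit $a$ coincides with that of $\grg^{a}0$, where $\grg$ acts as $10^{-1}$, which justifies passing to the two-letter alphabet $\{0,\grg\}$; (ii) read off the candidate period $p$ as the length of the $\grg$-circuit through the chosen state, the residues $R$ from the final states on that circuit, and a pair $(s,t)$ by following $0^{-1}$-transitions until that circuit is first re-crossed (\propo{can-par} and \propo{can-par2}); (iii) prove that every class of the putative quotient morphism is an orbit of left multiplication by $(s,t)$ in the semidirect product $\Zpsdpsi$ (\propo{can-cla}), which yields explicit transition formulas on a set of $p\,t$ representatives; and (iv) verify the quotient by a single marking traversal against those formulas, checking that distinct states receive distinct marks. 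Without step (iii) there is no reason a linear-time test exists. A secondary omission is (UP-4): verifying that each \ttscc (a $0$-circuit) is embeddable in its associated \toscc requires constructing the embedding from the outgoing non-$0$ transitions and checking its consistency with the $0$-successors; this is linear but is not a test of an SCC in isolation, contrary to your phase-2/phase-3 separation.
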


As for Cobham's theorem (\cf~\cite{BruyEtAl94,DuraFRigo11}), new
insights on the problem tackled here are obtained when stating it in
a higher dimensional space.
Let~$\N^{d}$ be the additive monoid of~$d$-tuples of integers.
Every~$d$-tuple of integers may be represented in base~$b$ by a
$d$-tuple of words of~$\Abe$ of \emph{the same length}, as shorter
words can be padded by~$0$'s without changing the corresponding value.
Such~$d$-tuples can be read by (finite) automata
over~$({\Ab}^{\! d})^{*}$ --- automata reading on~$d$ synchronised
tapes ---
and a subset of~$\N^{d}$ is \brec if the set of the
$b$-representations of its elements is accepted by such an automaton.

On the other hand, recognisable and rational sets of~$\Nd$ are
defined in the classical way but they do not coincide
as~$\Nd$ \emph{is not a free monoid}.
A subset of~$\Nd$ is \emph{recognisable} if is saturated by a
congruence of finite index, and the family of recognisable sets is
denoted by~$\Rec\Nd$.
A subset of~$\Nd$ is \emph{rational} if is denoted by a rational
expression, and the family of rational sets is
denoted by~$\Rat\Nd$.
Rational sets of~$\Nd$ have been characterised by Ginsburg and
Spanier as sets definable in the \emph{Presburger arithmetic}
$\StruSA{\N,+}$ (\cite{GinsSpan66}), hence the name \emph{Presburger
definable} that is most often used in the literature.

It is also common knowledge that every rational set of~$\Nd$ is \brec for
every~$b$, and the example in dimension~$1$ is enough to show that a \brec set
is not necessarily rational.
The generalisation of Cobham's theorem:
\emph{a subset of~$\Nd$ which is both~$b$- and~$c$-recognisable, for
multiplicatively independent~$b$ and~$c$, is rational}, is due
to Semenov (\cf~\cite{BruyEtAl94,DuraFRigo11}).
The generalisation of Honkala's theorem went as smoothly.

\begin{theorem}[Muchnik~\cite{Much03}]
\label{t.muc}%
It is decidable whether a \brec subset of~$\Nd$ is rational.
\end{theorem}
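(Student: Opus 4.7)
The plan is to mirror Honkala's strategy for \theor{hon}, transposed from ultimately periodic sets in~$\N$ to semi-linear subsets of~$\Nd$. By the Ginsburg--Spanier characterisation invoked above, a subset $X \subseteq \Nd$ is rational \ifof it is a finite union of \emph{linear sets}, i.e.\ sets of the form $\{v_0 + k_1 v_1 + \ldots + k_r v_r : k_1,\ldots,k_r \in \N\}$ with $v_0,v_1,\ldots,v_r \in \Nd$. Given a \brec set $X$ presented by a DFA $\Ac$ on $(\Ab^{d})^{*}$, the task thus becomes: decide whether $X$ admits such a semi-linear representation.

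The crux of the proof is to establish, as a function of the size~$n$ of~$\Ac$, a computable bound $B(n)$ on the parameters of any potential semi-linear representation of~$X$: the number of linear pieces, the number~$r$ of period vectors per piece, and the entrywise norms of the vectors~$v_0,\ldots,v_r$ involved. If $X$ is rational, period vectors should be witnessed by cycles in the SCCs of~$\Ac$ and offset vectors by short transient paths leading into or between these SCCs; a pumping argument in the spirit of Honkala's original bound computation then yields~$B(n)$.

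Once $B(n)$ is in hand, the second step is routine: enumerate all semi-linear representations with parameters at most~$B(n)$, and for each candidate~$Y$, construct effectively a DFA recognising~$Y$ (rational sets are \brec for every~$b$, as noted above) and test it for equivalence with~$\Ac$. The set $X$ is rational \ifof at least one candidate matches.

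The main obstacle is the first step. In dimension one, the period and threshold of a UP set can be extracted almost directly from the SCC structure of~$\Ac$, as the structural characterisations of the present paper make explicit. In higher dimensions, however, the joint behaviour of the $d$ coordinates must be controlled simultaneously: one must identify enough independent period vectors to account for the "periodic" part of $X$, and bound how offset vectors propagate along the DAG of SCCs of~$\Ac$. This delicate combinatorial analysis is precisely where the bulk of Muchnik's argument lies, and it is the source of the much higher complexity of \theor{muc} compared to the one-dimensional setting addressed by \theor{com-plx}.
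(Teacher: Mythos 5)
This theorem is quoted from Muchnik~\cite{Much03}; the paper offers no proof of it, so your attempt must stand on its own. As written it is a proof plan rather than a proof: the entire content of the theorem is concentrated in the step you yourself call ``the main obstacle'', namely the existence of a computable bound~$B(n)$ such that every rational set accepted by an $n$-state DFA over~$({\Ab}^{d})^{*}$ admits a semi-linear representation whose parameters are all below~$B(n)$. You first assert that ``a pumping argument in the spirit of Honkala's original bound computation then yields~$B(n)$'' and then concede that this analysis is ``precisely where the bulk of Muchnik's argument lies''; asserting the key lemma and deferring it to the cited author leaves the theorem unproved. (A secondary slip: the bound must be on the parameters of \emph{some} semi-linear representation of~$X$, not of \emph{any} one --- such representations are far from unique --- otherwise the completeness of the enumeration step is not justified.)

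The gap is genuine and not a routine omission, because the one-dimensional pumping intuition does not transfer. Reading least significant digit first, iterating a cycle~$z$ after a prefix~$u$ produces the values $\val{u}+b^{\wlen{u}}\,\val{z}\,(b^{k\wlen{z}}-1)/(b^{\wlen{z}}-1)$, a geometric rather than arithmetic progression; so cycles of the DFA do not directly witness the period vectors $v_1,\ldots,v_r$ of a linear set, and nothing in the SCC structure obviously bounds the number of linear pieces or the number of periods per piece. This is precisely why Muchnik's actual argument does not follow the enumerate-and-test scheme: he establishes a \emph{local periodicity} criterion characterising the rational (equivalently, Presburger-definable) subsets of~$\Nd$, shows that this criterion is itself expressible in the automatic structure associated with base~$b$, and decides it there --- which is also the source of the triply exponential complexity mentioned in the surrounding text. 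Your outline reproduces Honkala's one-dimensional strategy and correctly locates where it breaks in dimension~$d$, but it does not bridge that point.
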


\begin{theorem}[Leroux~\cite{Lero05}]
\label{t.ler}%
It is decidable \emph{in polynomial time} whether
a minimal DFA~$\Ac$
over~$({\Ab}^{\! d})^{*}$ accepts a rational subset
of~$\Nd$.
\end{theorem}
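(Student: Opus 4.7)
The plan is to lift the strategy of \theor{com-plx} from dimension one to arbitrary~$d$. Rational subsets of~$\Nd$ are exactly the semilinear sets, i.e.\ finite unions of sets of the form $\{v_0+n_1 v_1+\cdots +n_k v_k\mid n_i\in\N\}$. Accordingly, one should look for a structural characterisation --- a higher-dimensional analogue of the \upcr\ --- of those minimal DFAs over~$(\Ab^d)^{*}$ whose language represents a semilinear subset of~$\Nd$, and then verify that this characterisation can be tested in polynomial time. Muchnik's decidability result (\theor{muc}) would serve only as a correctness guide, not as a black box, since its complexity is not polynomial.

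I would first analyse the strongly connected components of~$\Ac$. In dimension one the terminal SCCs of a UP-automaton have a very restricted shape, reflecting the fact that recognisable subsets of~$\N$ are unions of arithmetic progressions. The natural generalisation is that each terminal SCC should encode a finite union of cosets of some sublattice of~$\Nd$. Concretely, I would try to show that within each SCC the residues modulo~$b^{k}$ (for a suitable~$k$) reachable at each state form a coset of a subgroup of~$(\Z/b^{k}\Z)^{d}$, and that every return loop acts on these residues by translation by an element of that subgroup.

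The second step is to examine the graph of SCCs. Traversing the transient part of~$\Ac$ corresponds, on the semilinear side, to selecting one among finitely many linear components of the final set. I would formulate a compatibility condition ensuring that the SCCs reachable from a transient prefix glue together into a semilinear union, and that their attached lattices are mutually consistent with the transitions used to enter each SCC.

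For the complexity, once the criterion is isolated, each of its local properties (shape of an SCC, coset structure, cross-SCC compatibility) should be checkable in polynomial time after computing the SCC decomposition in linear time. The main obstacle, and the reason a polynomial rather than linear bound is expected, is the higher-dimensional lattice arithmetic: verifying that cosets match across the graph of SCCs appears to reduce to solving systems of linear Diophantine equations whose size depends on~$d$ and on the periods induced by~$\Ac$, so that a direct extension of the linear-time argument of \theor{upc-lin} seems out of reach. Isolating a canonical, finitely-presented algebraic datum on each SCC that exactly captures rationality, and bounding the cost of matching these data along the transient structure, is the step I expect to be the most delicate.
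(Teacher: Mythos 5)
First, note that the paper does not prove this statement at all: \theor{ler} is quoted from Leroux's paper, and the authors explicitly describe that proof as resting on ``sophisticated geometric constructions'' and stress that their own \upcr{} is \emph{not} derived from it. So there is no in-paper proof to match your proposal against; what can be assessed is whether your sketch would actually yield the theorem. It would not, for a reason that goes beyond its being a research plan rather than a proof.

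The central gap is in your proposed characterisation of the terminal SCC's. You want each terminal SCC to ``encode a finite union of cosets of some sublattice of~$\Nd$,'' with residues modulo~$b^k$ forming cosets of a subgroup of~$(\Z/b^k\Z)^d$ and return loops acting by translation. That kind of coset/congruence structure characterises the \emph{recognisable} subsets of~$\Nd$ (those saturated by a finite-index congruence), not the \emph{rational} ones. In dimension~$1$ the two families coincide --- both equal the ultimately periodic sets --- which is exactly why the \upcr{} works there. For~$d\geq 2$ they do not: $\Rat\Nd$ strictly contains $\Rec\Nd$, and a set such as $\{(n,m)\in\N^2 \mid n\leq m\}$ is rational (Presburger definable) and $b$-recognisable, yet is not a finite union of cosets of sublattices and is not saturated by any finite-index congruence. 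Your criterion would therefore reject automata that do accept rational sets, so the characterisation is wrong, not merely unverified. Capturing general linear sets $v_0+\N v_1+\cdots+\N v_k$, whose periods need not generate a finite-index subgroup and whose coefficients range over~$\N$ rather than~$\Z$, is precisely the difficulty that forces Leroux into genuinely geometric arguments (detecting vector spaces and polyhedral boundaries from the automaton) instead of the group-theoretic analysis of Pascal automata used in dimension~$1$. The remaining steps of your plan (the cross-SCC compatibility condition, the Diophantine matching) are stated as intentions rather than arguments, but they inherit this same flaw: they are built on the wrong invariant.
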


The algorithm underlying \theor{muc} is triply exponential whereas
the one described in~\cite{Lero05}, based on sophisticated geometric
constructions, is quadratic --- an impressive improvement --- but not
easy to explain.

There exists another way to devise a proof for Honkala's theorem which yields
another extension.
In \cite{GinsSpan66}, Ginsburg and Spanier also proved that there exists a formula
in Presburger arithmetic deciding whether a given subset of~$\Nd$ is recognisable.
In dimension 1, it means that being a \upsn is expressible in Presburger 
arithmetic.
%
%
%
In \cite{AlloEtAL09}, it was then noted that since addition in base~$p$
is realised by a finite automaton, every Presburger formula is realised by
a finite automaton as well. Hence a decision procedure that establishes
\theor{hon}.

Generalisation of base~$p$ by non-standard numeration systems then gives an
extension of \theor{hon}, best expressed in terms of abstract numeration
systems.
Given a totally ordered alphabet~$A$, any \emph{rational} language~$L$
of~$\Ae$ defines an \emph{abstract numeration system}
(ANS)~$\SL$ in which the integer~$n$ is represented by the~$\npo$-th
word of~$L$ in the radix ordering of~$\Ae$
(\cf~\cite{LecoRigo10hb}).
A set of integers whose representations in the ANS~$\SL$ form a
rational language is called~$\SL$-recognisable and it is known that
every \upsn is~$\SL$-recognisable for every ANS~$\SL$
(\cite{LecoRigo10hb}).
The next statement then follows.

\begin{theorem}
\label{t.all-eta}%
If~$\SL$ is an abstract numeration system in which addition is
realised by a finite automaton, then
it is decidable whether a~$\SL$-recognisable set of numbers is \up.
\end{theorem}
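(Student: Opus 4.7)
The plan is to reduce ultimate periodicity to the truth of a single first-order sentence over the expanded structure $(\N,+,X)$, and to decide that sentence by the automata-theoretic method used for \theor{hon} in \cite{AlloEtAL09}. A set $X \subseteq \N$ is \up \ifof the sentence
\[
\varphi_{X} \;\equiv\; \exists p\,\exists N\,\forall n\; \bigl(\, p \geq 1 \;\wedge\; (n < N \;\vee\; (X(n) \Leftrightarrow X(n+p)))\,\bigr)
\]
holds, where the order, the constant~$1$, and the condition ``$p\geq 1$'' are all definable from addition alone.

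The hypotheses of the theorem are exactly what is needed to turn $\SL$ into an \emph{automatic presentation} of $(\N,+,X)$: the ternary addition relation is recognised by a finite automaton on synchronous tapes of $\SL$-representations (by assumption); equality is trivially automatic; and the unary predicate~$X$ is $\SL$-recognisable (by hypothesis, since $X$ is the set whose ultimate periodicity is being tested). With appropriate right-padding to equalise lengths, multi-tape automata on $\SL$-representations are then closed under Boolean operations and under projection of a coordinate, the latter via nondeterministic erasure of one tape followed by determinisation.

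Following the standard B\"uchi--Bruy\`ere construction for automatic structures, I would build, by induction on the structure of~$\varphi_{X}$, a finite automaton recognising the set of tuples of $\SL$-representations satisfying each subformula. Since $\varphi_{X}$ is a closed sentence, this induction produces a zero-tape automaton whose acceptance or rejection is precisely the answer to whether~$X$ is \up, and this answer is effectively computable at each step.

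The main obstacle is technical rather than conceptual: one must be careful with the padding conventions for $\SL$-representations of different lengths, so that projection genuinely preserves $\SL$-recognisability (and is not just a property of the paddings chosen). This bookkeeping is exactly what is established in \cite{AlloEtAL09} for abstract numeration systems in which addition is finite-state realisable, and it is what makes the whole inductive construction go through.
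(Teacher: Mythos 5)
Your proposal is correct and follows essentially the same route as the paper, which sketches exactly this argument: ultimate periodicity of~$X$ is expressible by a first-order sentence over~$\langle\N,+,X\rangle$ (the paper cites Ginsburg--Spanier for the Presburger-definability of recognisability), and since addition is realised by a finite automaton in~$\SL$ and~$X$ is $\SL$-recognisable, the sentence can be decided by the automata-theoretic evaluation of~\cite{AlloEtAL09}. Your explicit formula and the remarks on padding and projection are just a more detailed write-up of the same argument.
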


For instance, \theor{all-eta} implies that ultimate periodicity is decidable
for sets of numbers represented by rational sets in a Pisot base
system \cite{Frou92}.
The algorithm underlying \theor{all-eta} is exponential
(if the
set of numbers is given by
a DFA)
and thus
(much) less efficient than Leroux's constructions for integer base
systems.
On the other hand, it applies to a much larger family of
numeration systems.
All this was mentioned for the sake of completeness, and  the present
paper does not follow this pattern.

\medskip

\theor{ler}, restricted to dimension~$1$, readily yields a quadratic procedure
for Honkala's theorem. The improvement from quadratic to quasilinear complexity
achieved in this article is not a natural simplification of Leroux's construction
for the case of dimension~1. Although the UP-criterion bears similarities with
some features of Leroux's construction, it is not derived from~\cite{Lero05}, nor is the
proof of quasilinear complexity.

\medskip

The paper is organised as follows.
In \rsection{pas-cal}, we treat the special case of determining whether a given minimal
group automaton accepts an ultimately periodic set of numbers. We
describe canonical automata, which we call Pascal automata, that accept such sets.
We then show how to decide in linear time whether a given
minimal group automaton is the quotient of some Pascal automaton.

\rsection{upc} introduces the UP-criterion and sketches both its completeness
and correctness. An automaton satisfying the UP-criterion is a directed
acyclic graph (DAG) 'ending' with at most two layers of non-trivial strongly connected
components (SCC's). If the root is seen at the top, the upper (non-trivial) SCC's are circuits
of 0's and the lower ones are quotients of Pascal automata.
It is easy, and of linear complexity to verify that an automaton has this
overall structure.
This criterium is sketched in \rfigure{upc}.
\newline

\TinyPicture
\begin{figure}[ht!]
  \centering
  \VCCall{up_criterion.vcsg}
  \caption{A schematic representation of the \upcr}
  \label{f.upc}
\end{figure}
\MediumPicture


\section{The Pascal automaton}
\label{s.pas-cal}%

\subsection{Preliminaries}
\label{s.pas-pre}%

\subsubsection{On automata}
\label{s.pas-pre-aut}%
We consider only finite deterministic finite automata, denoted
  by~$\msp\Ac=\aut{Q,A,\delta,i,T}\msp$, where~$Q$ is the set of
  \emph{states},~$i$ the \emph{initial state} and~$T$ the set
  of \emph{final states};~$A$ is the \emph{alphabet},~$A^*$ is
  the \emph{free monoid} generated by~$A$ and the \emph{empty word} is denoted
  by~$\epsilon$;~${\delta:Q\times A \rightarrow Q}$ is the \emph{transition
  function}.

As usual,~$\delta$ is extended to a function~$Q\x A^*\rightarrow Q$
    by~$\delta(q,\epsilon)=q$ \linebreak
    and~$\delta(q,ua)\nlb=\nlb\delta(\delta(q,u),a)$;
    and~$\delta(q,u)$ will also be denoted by~$q\cdot u$.
When~$\delta$ is a total function,~$\mathcal A$ is said to be \emph{complete}.
In the sequel, we only consider automata that are \emph{accessible}, \ie in which
every state is reachable from~$i$.

A word~$u$ of~$A^*$ is \emph{accepted}
  by~$\Ac$ if~$\msp i \cdot u \msp$ is in~$T$.
The set of words accepted by~$\Ac$ is called \emph{the} language of~$\Ac$,
  and is denoted by~$\behavement{\Ac}$.
%

Let~$~\Ac=\aut{Q,A,\delta,i,T}$ and~$~\Bc=\aut{R,A,\eta,j,S}$ be two
deterministic automata.
A map~$\phi:Q\rightarrow R$ is an \emph{automaton morphism},
written~${\phi:\Ac\rightarrow\Bc}$
if~$\phi(i)=j$,~$\phi(T)\subseteq S$,
and for all~$q$ in~$Q$ and~$a$ in~$A$, such that~$\delta(q,a)$ is defined,
then~$\eta(\phi(q),a)$ is defined, and~$\phi(\delta(q,a))=\eta(\phi(q),a)$.
We call~$\phi$ a \emph{covering} if the following two conditions hold: 
  i)~$\phi(T)=S$ and 
  ii) for all~$q$ in~$Q$ and~$a$ in~$A$, if~$\eta(\phi(q),a)$ is defined,
  then so is~$\delta(q,a)$. 
In this case,~$\behavement{\Ac}=\behavement{\Bc}$, and~$\Bc$ is called
  \emph{a quotient} of~$\Ac$.
Note that if~$\Ac$ is complete, every morphism satisfies (ii).

Every complete deterministic automaton~$\Ac$ has a \emph{minimal quotient} which is 
  \emph{the minimal automaton} accepting~$\behavement{\Ac}$.
This automaton is unique up to isomorphism and can be computed from~$\Ac$
  in~$O(n\xmd log(n))$ time, where~$n$ is the number of
  states of~$\Ac$ (\cf~\cite{AhoHopcUllm74}).

Given a deterministic automaton~$\Ac$, every word~$u$ induces an appli-\linebreak
  cation~(${q\xmd\mapsto\xmd q\cdot u}$) over the state set.
These applications form a finite monoid, called the
  \emph{transition monoid} of~$\Ac$.
When this monoid happens to be a group (meaning that the action
  of every letter is a permutation over the states),~$\Ac$ is
  called a \emph{group automaton}.

\subsubsection{On numbers}
\label{s.pas-pre-num}%
The base~$b$ is fixed throughout the paper (it will be a
  \emph{parameter} of the algorithms, not an input) and so
  is the digit alphabet~$\Ab$.
As a consequence, the number of transitions of any deterministic 
  automaton over~$\Abe$ is
  linear in its number of states.
Verifying that an automaton is deterministic (resp. a group
  automaton) can then be done in linear time.

For our purpose, it is far more convenient to write the integers
  \emph{least significant digits first} (LSDF), and to keep the
  automata reading \emph{from left to right} (as in Leroux's work~\cite{Lero05}).
The \emph{value} of a word~$u=\word{a_{0}a_{1}\cdots a_n}$ of~$\Abe$,
  denoted by~$\val{u}$, is then~$\msp\val{u}\nlb=\nlb\sum_{i=0}^n(a_ib^i)\msp$
  and may be obtained by the recursive formula:
\begin{equation}
  \val{u\xmd a} = \val{u} + a \xmd b^{\wlen{u}}
  \label{q.val}
\end{equation}

Conversely, every integer~$n$ has a unique canonical representation
  in base~$b$ that does not \emph{end} with~$0$, and is denoted by~$\Rep{n}$.
A word of~$\Abe$ has value~$n$ if, and only if, it is of the
  form~$\Rep{n}\xmd 0^{k}$.

By abuse of language, we may talk about the \emph{set of numbers}
  accepted by an automaton.
An integer~$n$ is then accepted if there exists a word of value~$n$
  accepted by the automaton.

A set~$E\subseteq\N$ is \emph{periodic}, of \emph{period}~$q$, if
  there exists~$S\subseteq\{0,1,\ldots,\qmo\}$ such that
  $\msp E=\Defi{n\in\N}{\ext r\in S \quantsmsp n\equiv r\xmd [q]}$.
Any periodic set~$E$ has a \emph{smallest} period~$p$
  and a corresponding set of \emph{residues}~$R$:
  the set~$E$ is then denoted by~$\EpR$.
The set of numbers in~$\EpR$ and larger than an integer~$m$ is
  denoted by~$\EpRm$.

\subsection{Definition of a Pascal automaton}
\label{s.def-pas}%

We begin with the construction of an automaton~$\ApR$ that accepts the
  set~$\EpR$, in the case where
\vspace*{-1em}
\begin{center}
  \emph{$p$ is coprime with~$b$}.
\end{center}
We call any such automaton a \emph{Pascal automaton}.%
    \footnote{
      As early as~1654, Pascal describes a computing process that
      generalises the casting out nines and that determines if an
      integer~$n$, written \emph{in any base~$b$}, is divisible by an
      integer~$p$ (see~\cite[Prologue]{Saka09}).%
    }
If~$p$ is coprime with~$b$, there exists a (smallest positive)
  integer~$\psi$ such that:
\begin{equation}
	\mdp{b^{\psi}}{1}
	\ee\text{and thus}\ee
	\fa x\in\N\quantsp
	\mdp{b^{x}}{b^{x\mod \psi}}
	\eqpnt
	\notag
\end{equation}
Therefore, from Equation~(\ref{q.val}), knowing~$\val{u}\mod p$
  and~$\wlen{u}\mod \psi$ is enough to compute~$\val{u\xmd a}\mod p$.

\noindent Hence the definition of
  $\msp\ApR=\aut{\Zppsi,\Ab,\eta,\zz,\RZpsi}\msp$, where
\begin{equation}
	\fa(s,t)\in\Zppsi\quantvrg
	\fa a\in\Ab\quantsp
	\msp \eta((s,t),a) = (s,t)\cdot a = (s+ a\xmd b^t,t+1)
	\label{q.pas-tra-1}
\end{equation}

By induction on~$\wlen{u}$, it follows that
  $\msp \zz\cdot u = (\val{u}\mod p,\wlen{u}\mod\psi)\msp$
  for every~$u$ in~$\Abe$ and consequently that~$\EpR$ is the set of number 
  accepted by~$\ApR$.

\begin{example}
  \label{e.aut_2mod3_2}%
  \figur{aut_2mod3_2} shows~$\Pasc{2}{3}$, the Pascal automaton
    accepting integers written in binary and congruent
    to~$2$ modulo~$3$.
    For clarity, the labels are omitted;
      transitions labelled by~$1$ are drawn with thick lines and those labelled
      by~$0$ with thin lines.
\end{example}

\begin{figure}[ht!]
  \centering
  \VCCall{aut_2mod3_2}
  \caption{The Pascal automaton~$\Pasc{2}{3}$}
  \label{f.aut_2mod3_2}
\end{figure}

In fact, this construction does not require that~$p$ and~$R$ be canonical.
For arbitrary~$p$ (still prime with~$b$) and~$R$, we call the automaton
  constructed in this way a \emph{generalised Pascal automaton} and denote it
  by~$\GpR$.

\subsection{Recognition of quotients of Pascal automata}
\label{s.rec-pas}%

The tricky part of achieving a linear complexity for \theor{upc-lin} is
  contained in the following statement:

\begin{theorem}\ltheorem{pas-quo-lin}
  It is decidable in linear time whether a minimal DFA~$\Ac$ over~$\Ab$ is the
  quotient of a Pascal automaton.
\end{theorem}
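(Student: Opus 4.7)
\medskip

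The plan is to reconstruct a candidate covering Pascal automaton $\Pasc{R}{p}$ directly from $\Ac$ and then verify the covering in linear time, via a coordinate assignment. Since $\Pasc{R}{p}$ is a group automaton acting regularly on $\ZZ{p}\times\ZZ{\psi}$, any covering $\Pasc{R}{p}\to\Ac$ is determined by a subgroup $H$ of the transition group $\ZZ{p}\rtimes\ZZ{\psi}$, and every state $q$ of $\Ac$ inherits candidate coordinates $(s(q),t(q))\in\ZZ{p}\times\ZZ{\psi}$ propagated from $(0,0)$ at the initial state via the Pascal transition rule $(s(q\cdot a),t(q\cdot a))=(s(q)+a\,b^{t(q)},\,t(q)+1)$.

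The algorithm has three phases. First, verify in one linear-time pass that $\Ac$ is a group automaton. Second, extract candidate parameters $\psi$ and $p$ from $\Ac$: I would take $\psi$ to be the least common multiple of the lengths of the $0$-cycles in $\Ac$ (each of which must divide $\mathrm{ord}_p(b)$ in any covering Pascal automaton), and $p$ to be determined from the combinatorial structure of $\Ac$ under the constraints $\gcd(p,b)=1$ and $\Modu{b^\psi}{1}{p}$, both of which are checkable in polylogarithmic time via fast modular exponentiation. Third, run a BFS from the initial state $i$: assign coordinates $(s,t)$ along spanning-tree edges using the Pascal transition rule, and verify the same rule at every non-tree transition and at every $0$-cycle closure. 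Finally, set $R=\{s(q)\mid q\in F\}$ and check the final-state condition $q\in F\Leftrightarrow s(q)\in R$ at every state. Accept iff every check passes; the total cost is $O(n)$.

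The main obstacle is the sufficiency direction of the correctness proof. When all local checks pass, a state $q$ whose $0$-cycle in $\Ac$ has length $\psi'<\psi$ admits several lifts $(s,t),\,(s,t+\psi'),\ldots$ in the candidate Pascal automaton, and I must show that the resulting map $\Pasc{R}{p}\to\Ac$ is a well-defined covering -- equivalently, that the fibres over each state of $\Ac$ form a single coset of a subgroup of $\ZZ{p}\rtimes\ZZ{\psi}$. The local checks performed during the BFS, combined with the rigid semidirect-product structure of the Pascal group, are designed to force the collection of preimages of each state to be closed under the group action; the counting equality $n\cdot|H|=p\psi$ then pins $H$ down uniquely and completes the argument.
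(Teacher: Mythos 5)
Your overall plan---read candidate parameters off $\Ac$, then verify a covering by a single linear-time traversal---is the same as the paper's, and your remark that a covering $\PpR\rightarrow\Ac$ is governed by a subgroup $H$ of $\Zpsdpsi$ is exactly the content of \propo{can-cla}. But two steps that carry the whole difficulty are left unargued. First, you give no concrete procedure for computing $p$: saying it is ``determined from the combinatorial structure of $\Ac$'' subject to $\gcd(p,b)=1$ and $\Modu{b^{\psi}}{1}{p}$ is not an algorithm (infinitely many $p$ satisfy these constraints, and $p$ is not visible on the $0$-cycles). The paper's device is a change of generators: the letter $\grg$ acting as $\gentwo$ sends $(s,t)$ to $(s+b^{t},t)$, so the $\grg$-orbit of the initial state of $\PpRp$ is $\{(0,0),(1,0),\ldots,(\pmo,0)\}$, and \propo{can-par} shows that a covering onto a \emph{minimal} automaton never identifies two of these states; hence $p$ is the length of the $\grg$-circuit through $i$ and $R=\Defi{r}{i\cdot\grg^{r}\in T}$. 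Without this (or an equivalent trick) the analysis phase is missing.

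Second, the verification phase as you state it rejects exactly the inputs it must accept. If you propagate a single coordinate pair $(s(q),t(q))\in\Zppsi$ along a BFS tree and ``verify the same rule at every non-tree transition,'' then whenever $\Ac$ is a \emph{proper} quotient some state has several preimages with different coordinates, and some non-tree edge necessarily exhibits a discrepancy, so the check fails. You acknowledge this in your closing paragraph, but the proposed repair (``the local checks \dots are designed to force the collection of preimages of each state to be closed under the group action'') is a statement of intent rather than a proof: it specifies neither how $H$ is obtained before the traversal nor what the modified consistency checks are. The paper resolves this by computing the generator of $H$ explicitly first---the pair $(s,t)$ with smallest $t$ such that $i\cdot\grg^{s}=i\cdot 0^{\uminus t}$ (\propo{can-par2})---proving that every $\phi$-class is precisely one orbit of left multiplication by $(s,t)$ (\propo{can-cla}), and then selecting in each class the unique representative with second component in $\Z/t\Z$. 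This yields closed transition formulas on $\ZpZ\x\Z/t\Z$ that a single deterministic marking traversal can check with no coset ambiguity. Until you specify how $p$ and the generator of $H$ are computed and restate your local checks modulo $H$, the sufficiency direction of your argument does not go through.
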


\paragraph*{Simplifications}
\label{s.pas-sim}%

Since~$\PpR$ is a group automaton, all its quotients are group automata.


The permutation on~$\Zppsi$ realised by~$0^{(\psi-1)}$ is the inverse
  of the one realised by~$0$ and we call it the
  action of the ``digit''~$0^{\uminus1}$.
Let~$\grg$ be a new letter whose action on~$\Zppsi$ is the one
  of~$\gentwo$.
It follows from~\equnm{pas-tra-1} that for every~$a$ in~$\Ab$
    --- where~$a$ is understood both
    as a \emph{digit} and as a \emph{number} ---
  the action of~$a$ on~$\Zppsi$ (in~$\PpR$) is equal to the one
  of~$g^{a}0$.
The same relation holds in any group automaton~$\Ac$ over~$\Abe$ that
  is a quotient of a Pascal automaton, and this condition is
  tested in linear time.

Let~$B=\{0,\grg\}$ be a new alphabet.
Any group automaton~$\Ac=\aut{Q,\Ab,\delta,i,T}$ may be transformed into
  an automaton~$\Ac'=\aut{Q,B,\delta',i,T}$ where, for every~$q$ in~$Q$,
  $\msp\delta'(q,0)=\delta(q,0)\msp$
  and~$\msp\delta'(q,\grg)=\delta(q,\gentwo)\msp$.

\figur{aut_2mod3_2_v} shows~$\Pascp{2}{3}$ where transitions
  labelled by~$0$ are drawn with thin lines and those labelled
  by~$g$ with double lines.%
    \footnote{%
      The transformation highlights that the transition monoid
      of~$\PpR$ (and thus of~$\PpRp$) is the \emph{semi-direct
      product}~$\Zpsdpsi$.%
    }

\begin{figure}[ht!]
  \centering
  \VCCall{aut_2mod3_2_v}
  \caption{The modified Pascal automaton~$\Pascp{2}{3}$}
  \label{f.aut_2mod3_2_v}
\end{figure}
%
\paragraph*{Analysis: computation of the parameters}
\label{s.pas-com}%

From now on, and for the rest of the
  section,~$\Ac=\aut{Q,\Ab,\delta,i,T}$ is a group automaton which has
  been consistently transformed into an
  automaton~$\Ac'=\aut{Q,B,\delta',i,T}$.
If~$\Ac'$ is a quotient of a Pascal automaton~$\PpRp$, then the
  parameters~$p$ and~$R$ may be computed (or~`read') in~$\Ac'$;
  this is the consequence of the following
  statement.

\begin{proposition}\label{p.can-par}%
  Let~$\msp\phi\colon\PpRp\rightarrow\Ac'\msp$
    be a covering.
  Then, for every~$(x,y)$ \linebreak and~$(x',y')$ in~$\Zppsi$,
    if~$x\neq x'$ and~$\phi(x,y)=\phi(x',y')$, then~$y\not=y'$.
\end{proposition}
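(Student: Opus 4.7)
The plan is to argue by contradiction: assume $\phi(x, y) = \phi(x', y)$ with $x \neq x'$, set $d := x' - x \not\equiv 0 \pmod p$, and derive a contradiction with the minimality of $p$ as the period of $\EpR$. The structural fact I exploit is that, with the action of $B = \{0, \grg\}$ given by $(s, t) \cdot 0 = (s, t+1)$ and $(s, t) \cdot \grg = (s + b^t, t)$, an easy induction on $|w|$ yields, for every $w \in B^*$, $(s, t) \cdot w = (s + \alpha(w, t), t + \beta(w))$ with $\alpha(w, t)$ and $\beta(w)$ independent of $s$.

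Reading the same word $w$ from $(x, y)$ and from $(x', y)$ thus produces two states whose first coordinates still differ by $d$ and whose second coordinates agree, which the morphism property forces $\phi$ to identify. Taking $w = 0^m \grg^k$ and letting $m, k$ vary (using that $b$ is invertible modulo $p$), the endpoint $(x, y) \cdot w$ already covers every element of $\Zppsi$, so I obtain the uniform identification $\phi(c, n) = \phi(c + d, n)$ for every $(c, n) \in \Zppsi$.

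To close the argument, I invoke the language equality $\behavement{\PpRp} = \behavement{\Ac'}$ granted by the covering. For each $c \in \{0, \ldots, p-1\}$, the word $\grg^c$ leads from $(0, 0)$ to $(c, 0)$ in $\PpRp$ and is thus accepted exactly when $c \in R$; by the morphism, it also leads in $\Ac'$ from $\phi(0, 0)$ to $\phi(c, 0)$, so language equality forces $\phi(c, 0)$ to be accepting in $\Ac'$ if, and only if, $c \in R$. Applying this to both $c$ and $c + d$ and combining with $\phi(c, 0) = \phi(c + d, 0)$ yields $c \in R \iff c + d \in R$ for every $c$, that is, $R + d = R$ in $\Z/p\Z$. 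Hence $R$ is a union of cosets of the nontrivial subgroup $\langle d \rangle \leq \Z/p\Z$, so $\EpR$ admits the strictly smaller period $\gcd(d, p) < p$, contradicting the minimality of $p$. I expect the most delicate point to be this last passage: the move from the global language equality to the pointwise statement ``$c \in R \iff \phi(c, 0)$ accepting'' must be routed through the concrete reaching words $\grg^c$, since a pointwise saturation property $\phi^{-1}(\text{acc.}) = \text{acc.}$ is not literally part of the paper's definition of a covering.
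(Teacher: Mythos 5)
Your proof is correct and follows essentially the same route as the paper's: both transport the collision $\phi(x,y)=\phi(x',y)$ along the group action to deduce that $R$ is invariant under the nonzero shift $d=x'-x$ in $\Z/p\Z$, contradicting the canonicity (minimality) of the period~$p$. The only differences are presentational: the paper translates the identified pair back to $(0,0)$ and $(z,0)$ and invokes preservation of finality under coverings, whereas you propagate the identification forward to all of $\Zppsi$ and recover finality through language equality and the reaching words $\grg^{c}$.
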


\begin{proof}
Ab absurdo.
Since
$\msp(x,y)\pathaut{0^{-y}\grg^{-x}}{\PpRp}(0,0)\msp$,
$\phi(x,y)=\phi(x',y)$ implies~$\phi(0,0)=\phi(z,0)$ with~$z=x'-x$.
Such~$z$ may be chosen minimal, and the image by~$\phi$ of the
$\grg$-circuit containing~$(0,0)$ in~$\PpRp$ (of length~$p$) is a
$\grg$-circuit in~$\Ac'$ of length~$z$.

Since the image, and the inverse image, by a covering of a
final state is final, if~$r$ is in~$R$, that is, if~$(r,0)$ is final,
so is~$(r+z,0)$, and the pair~$p,R$ is not canonical.
\end{proof}

\begin{corollary} \label{c.can-par}%
  If~$\Ac'=\aut{Q,B,\delta',i,T}$ is a quotient of a modified Pascal
    automaton~$\PpRp$, then~$p$ is the length of the~$\grg$-circuit
    in~$\Ac'$ which contains~$i$ and~$R=\Defi{r}{i\cdot \grg^{r}\in T}$.
\end{corollary}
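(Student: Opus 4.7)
The plan is to deduce both parts of the corollary as direct consequences of \propo{can-par}, once the $\grg$-orbit of $(0,0)$ in $\PpRp$ has been identified explicitly.

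First I would unwind the action of $\grg$ on the states of $\PpRp$. Since $\grg$ acts as $\gentwo$ and, by~\equnm{pas-tra-1}, $(s,t)\cdot a = (s + a\xmd b^{t}, t+1)$, composing the action of $1$ with the inverse of the action of $0$ yields $(s,t)\cdot\grg = (s + b^{t}, t)$: the second coordinate is fixed. In particular, the $\grg$-orbit of $(0,0)$ in $\PpRp$ is exactly $(0,0), (1,0), \ldots, (\pmo,0)$, a circuit of length $p$.

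Fix now a covering $\phi\colon \PpRp \to \Ac'$ and set $i = \phi(0,0)$. The image of that circuit is a closed $\grg$-walk $i, i\cdot\grg, \ldots, i\cdot\grg^{\pmo}, i$ in $\Ac'$; hence the length of the $\grg$-circuit through $i$ divides~$p$. The one non-trivial step is to rule out a proper divisor: if $i\cdot\grg^{z} = i$ for some $0 < z < p$, then $\phi(z,0) = \phi(0,0)$ with $0 \neq z$ in $\Z/p\Z$ but $0 = 0$ in $\Z/\psi\Z$, contradicting~\propo{can-par} applied with $y = y' = 0$. Consequently the $\grg$-circuit at $i$ has length exactly $p$.

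For the description of $R$, I would use that a covering both preserves \emph{and reflects} final states, the fact already invoked in the proof of~\propo{can-par}: since the final states of $\PpRp$ are $R\x\Z/\psi\Z$, the state $\phi(x,0)$ is final in $\Ac'$ \ifof $x \in R$. Combining this with $i\cdot\grg^{r} = \phi((0,0)\cdot\grg^{r}) = \phi(r,0)$ and specialising to $0 \leq r < p$ yields $R = \Defi{r}{i\cdot\grg^{r}\in T}$, which is precisely the announced description. The main (indeed only) obstacle is the contradiction step that rules out a period strictly smaller than $p$; every other step is a straightforward unwinding of definitions.
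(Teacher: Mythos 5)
Your proof is correct and follows exactly the route the paper intends: the corollary is stated there without a separate proof precisely because it is the direct consequence of \propo{can-par} (no two states $(x,0)$, $(x',0)$ with $x\neq x'$ can be $\phi$-identified, so the $\grg$-circuit through $i$ has length exactly $p$) together with the fact, already invoked in the proof of that proposition, that a covering preserves and reflects final states. Your explicit computation of the $\grg$-action, $(s,t)\cdot\grg=(s+b^{t},t)$, and the divisibility-plus-contradiction step are just the details the paper leaves implicit.
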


Next, if~$\Ac'$ is a quotient of a (modified) Pascal
  automaton~$\PpRp$, the equivalence class of the initial
  state of~$\PpRp$ may be `read' as well in~$\Ac'$ as
  the intersection of the~$0$-circuit and the~$\grg$-circuit
  around the initial state of~$\Ac'$.
More precisely, and since
  $\msp(0,0)\pathaut{\grg^{s}}{\PpRp}(s,0)\pathaut{0^{t}}{\PpRp}(s,t)\msp$,
  the following holds.

\begin{proposition}\label{p.can-par2}%
  Let~$\msp\phi\colon\PpRp\rightarrow\Ac'\msp$
    be a covering.
  For all~$s$ in~$\Z/p\Z$ and $t$ in~${\Z/\psi\Z}$,~${\phi(s,t)=\phi(0,0)}$ \ifof
    $\msp i\cdot \grg^{s}=i\cdot0^{\uminus t}\msp$.
\end{proposition}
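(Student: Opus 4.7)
The plan is to exploit the fact that $\phi$ is an automaton morphism, so it commutes with the transition function, together with the observation that $\Ac'$, being a group automaton, lets one invert the action of~$0$ on its states. The key ingredient is the path computation recalled just before the statement, namely $\zz \cdot \grg^{s} 0^{t} = (s,t)$ in $\PpRp$, obtained by concatenating $\zz \xrightarrow{\grg^{s}} (s,0) \xrightarrow{0^{t}} (s,t)$.

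First I would apply the morphism identity $\phi(q \cdot w) = \phi(q) \cdot w$ with $q = \zz$ and $w = \grg^{s} 0^{t}$. Combined with $\phi(\zz) = i$ and the path identity above, this rewrites to $\phi(s,t) = i \cdot \grg^{s} 0^{t}$. The target equation $\phi(s,t) = \phi(\zz)$ therefore amounts to $i \cdot \grg^{s} 0^{t} = i$. Since $\Ac'$ is a group automaton, the action of $0^{t}$ on $Q$ is a permutation whose inverse is the action of $0^{\uminus t}$, so the previous equality is in turn equivalent to $i \cdot \grg^{s} = i \cdot 0^{\uminus t}$, which is exactly the right-hand side of the biconditional.

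Both directions then fall out of this single chain of equivalences, so there is no genuine obstacle. The only points demanding care are to check that $\zz \cdot \grg^{s} 0^{t}$ is actually defined in $\PpRp$ before invoking the morphism identity, which is immediate since $\PpRp$ is complete, and to rely on the group-automaton hypothesis on $\Ac'$ to legitimately cancel $0^{t}$ on the right. This last step is the one place where the preliminary simplification of Section~\ref{s.rec-pas} (passing from $\Ac$ to $\Ac'$ and working inside the semi-direct product $\Zpsdpsi$) really does the work.
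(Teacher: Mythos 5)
Your proof is correct and takes essentially the same approach as the paper, which states the proposition as an immediate consequence of the path $\msp(0,0)\pathaut{\grg^{s}}{\PpRp}(s,0)\pathaut{0^{t}}{\PpRp}(s,t)\msp$ and leaves the computation implicit. Your chain of equivalences --- push this path through the morphism $\phi$ to get $\phi(s,t)=i\cdot\grg^{s}0^{t}$, then cancel the permutation induced by $0^{t}$ using the group-automaton property --- is exactly the omitted argument.
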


From this proposition follows that, given~$\Ac'$, it is easy to
  compute the class of~$(0,0)$ modulo~$\phi$ if~$\Ac'$ is indeed a
  quotient of a (modified) Pascal automaton by~$\phi$.
Starting from~$i$, one first marks the states on the~$\grg$-circuit~$C$.
  Then, starting from~$i$ again,  one follows
  the~$0^{\uminus1}$-transitions: the \emph{first time}~$C$ is
  crossed yields~$t$.
This parameter is \emph{characteristic} of~$\phi$, as explained now.

Let~$(s,t)$ be an element of the semidirect
  product~$\sdpgp=\Zpsdpsi$ and~$\ipstf$ the permutation on~$\sdpgp$
  induced by the multiplication \emph{on the left} by~$(s,t)$:
\begin{equation}
	\ipst{(x,y)} = (s,t)\xmd(x,y)= (x\xmd b^{t}+ s, y + t)
	\eqpnt
	\label{q.pas-quo-1}
\end{equation}
The same element~$(s,t)$ defines a permutation~$\iptstf$ on~$\ZpZ$ as
  well:
\begin{equation}
	\fa x\in\ZpZ\quantsp
	\iptst{x} = x\xmd b^{t}+ s
	\eqpnt
	\label{q.pas-quo-2}
\end{equation}

Given a permutation~$\sigma$ over a set~$S$, \emph{the orbit}
of an element~$s$ of~$S$ under~$\sigma$ is the set~$\set{\sigma^i(s)~|~i\in\N}$.
\emph{An orbit} of~$\sigma$ is one of these sets.

\begin{proposition}
\label{p.can-cla}%
  Let~$\msp\phi\colon\PpRp\rightarrow\Ac'\msp$
    be a covering
    and let~$(s,t)$ be the state~$\phi$-equivalent to~$\zz$ with the
    smallest second component.
  Then, every $\phi$-class is an orbit
    of~$\ipstf$ (in~$\sdpgp$) and~$R$ is an union of orbits of~$\iptstf$ (in~$\Z/p\Z$).
\end{proposition}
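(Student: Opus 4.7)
The plan is to identify the $\phi$-class $H$ of $\zz$ as a cyclic subgroup of $G=\sdpgp$ generated by $(s,t)$, from which both claims of the proposition follow routinely. First I observe that $\PpRp$ is a group automaton whose letters $0$ and $\grg$ act on states by right-multiplication in $G$ by $(0,1)$ and $(1,0)$ respectively; since these two elements generate $G$, the $\phi$-equivalence is a right-congruence compatible with right-multiplication by any element of $G$. The usual argument (closure of $H$ under products and inverses via the right action, then $q_1\sim q_2 \Leftrightarrow q_2 q_1^{-1}\in H$) then yields that $H$ is a subgroup of $G$ and that the other $\phi$-classes are the right cosets $Hq$.

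The core of the proof---and the main obstacle---is to establish $H=\langle (s,t)\rangle$. Here I invoke \propo{can-par}: within a single $\phi$-class, no two distinct elements share a second component, so the projection $\pi\colon (x,y)\mapsto y$ restricted to $H$ is injective. Its image $\pi(H)$ is a cyclic subgroup of $\Z/\psi\Z$ generated by its smallest positive element, which, by the choice of $(s,t)$, is precisely $t$; hence $|H|=|\pi(H)|=\psi/t$. The element $(s,t)^{\psi/t}$ lies in $H$ and projects to $0$, so by injectivity of $\pi|_H$ it equals $\zz$; thus $(s,t)$ has order exactly $\psi/t=|H|$, forcing $\langle (s,t)\rangle = H$. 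The first claim follows immediately: the orbit of $\ipstf$ through a state $q$ is $\{(s,t)^k q\mid k\in\Z\}=\langle (s,t)\rangle q = Hq$, which is precisely the $\phi$-class of $q$. Without \propo{can-par}, a generic subgroup of the (non-abelian) group $G$ need not be cyclic, and this step would break down.

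For the second claim, I use that $\phi$, being a covering between accessible automata realising the same language, preserves finality in both directions: a state of $\PpRp$ is final if and only if its image in $\Ac'$ is. Given $r\in R$, the state $(r,0)$ is final in $\PpRp$; since it is $\phi$-equivalent to $\ipst{(r,0)}=(s+b^t r,t)$, this latter state shares the same image and is therefore also final, and because finality in $\PpRp$ depends only on the first coordinate, $\iptst{r} = s+b^t r$ lies in $R$. As $\iptstf$ is a bijection of the finite set $\Z/p\Z$, the inclusion $\iptstf(R)\subseteq R$ upgrades to $\iptstf(R)=R$, so $R$ is a union of orbits of $\iptstf$.
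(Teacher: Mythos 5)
Your proof is correct, and it reaches the same two milestones as the paper's --- the $R$-claim follows from the stability of $\phi$-classes under left multiplication by $(s,t)\sim\zz$, and everything reduces to identifying the $\phi$-class of $\zz$ --- but the central identification is argued along a genuinely different (and more algebraic) route. The paper works directly on second components: using the right-congruence to translate any class back to that of $\zz$, it argues that no two elements of that class can have second components differing by less than $t$, and deduces that the class is exactly $\{\tau_{(s,t)}^{\,j}\zz\}_j$. You instead observe that the class $H$ of $\zz$ is a subgroup of $\sdpgp$ whose classes are the right cosets $Hq$, use \propo{can-par} to make the projection onto $\Z/\psi\Z$ injective on $H$, identify $\pi(H)$ as the cyclic subgroup generated by $t$, and conclude by an order count that $H=\langle(s,t)\rangle$. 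The two arguments lean on the same two pillars (\propo{can-par} and the right-congruence induced by the group-automaton structure), but your packaging buys some rigour: the paper's claim that two elements of the class cannot have second components within $t$ of each other is justified only informally (``applying $\ipstf$ enough times\ldots''), whereas your subgroup/order computation makes that counting step airtight, at the price of first establishing that $H$ is a subgroup and that $t$ divides $\psi$. Your finality argument for the $R$-claim (image and inverse image of a final state under a covering are final, plus finiteness to upgrade $\iptstf(R)\subseteq R$ to equality) is slightly more detailed than the paper's one-line deduction, and matches the property the paper itself invokes in the proof of \propo{can-par}. One cosmetic caveat: the statement's ``smallest second component'' must be read as ``smallest \emph{positive} second component'' (i.e.\ excluding $\zz$ itself), as you implicitly do; otherwise $\pi(H)$ would trivially be generated by $0$.
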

\begin{proof}  Since~$\phi$ is a covering and~$\ipstf$ multiplies on the left 
  by~$(s,t)$ (which is~$\phi$-equivalent to~(0,0)), it follows 
  that~$\ipstf$ is stable on every~$\phi$-equivalence class.
  This already proves that~$R$ is a union of orbits of~$\iptstf$.

  Moreover, one can reduce the case where there are two elements~$h$ and~$h'$
  in the same~$\phi$-equivalence class but
  in a different {$\ipstf$-orbit}, to the case where one of them is~$(0,0)$,
  by multiplying both by~$h^{\uminus1}$ on the right.
  
  \smallskip


    We denote by~$C$ the~$\phi$-equivalence class of~$(0,0)$.
    The action of~$\ipstf$ on the second component is simply to add~$t$.
    It follows that there cannot be two vertices~$(x_1,y_1)$ and~$(x_2,y_2)$
    in~$C$ such that~$\abs{y_1-y_2}<t$, since applying~$\ipstf$ enough time
    to both would yield a vertex in~$C$ whose second component is smaller
    than~$t$.

    For all~$j$,~$\ipstf^j(0,0)$ is equal to~$(z_j,j\xmd t)$ for some~$z_j$,
    and is in~$C$ (since~$\ipstf^j$ is stable over~$C$).
    There cannot be any other state~$(x,y)$ in~$C$, since otherwise there would exist
    some~$j$ such that~$j\xmd t\leq y < (j+1)\xmd t$, and
    then~$\abs{y - j\xmd t}<t$.

    Hence~$C$ is the orbit of (0,0) for~$\ipstf$.
\end{proof}
\paragraph*{Synthesis: verification that a given automaton is a
quotient of a Pascal automaton}
\label{s.pas-ver}%

Given~$\Ac'=\aut{Q,B,\delta',i,T}$, let~$p$,~$R$ and~$(s,t)$ computed
as explained above.
It is easily checked that~$R$ is an union of orbits of~$\iptstf$ and
that~$\jsCard{Q}=p\xmd t$.
The last step is the verification that~$\Ac'$ is indeed (isomorphic
to) the quotient of~$\PpRp$ by the morphism~$\phi$ defined by~$(s,t)$.

A corollary of \propo{can-cla} (and of the multiplication law
in~$\sdpgp$) is that every class modulo~$\phi$ contains one, and
exactly one, element whose second component is smaller than~$t$.
From this observation follows that the multiplication by the
generators~$0=(0,1)$ and~$\grg=(1,0)$ in the quotient of~$\PpRp$
by~$\phi$ may be described on the set of representatives\\
\PushLine
$\msp Q_{\phi}=\Defi{(x,z)}{x\in\Z/p\Z, z\in\Z/t\Z}\msp$
\PushLine\\
(beware that~$z$ is in~$\Z/t\Z$
and not in~$\Z/\psi\Z$) by the following formulas:
\begin{align}
	\fa (x,z)\in Q_{\phi}\quantsp
	\eee & \e \notag\\
	(x,z)\act 0 = (x,z)\gprod (0,1) &=
{\begin{cases}
      \; (x, z+1) & \text{\e if  \e $z < t-1$} \\[.8ex]
      \;  \ipsti{(x,z+1)}= (\frac{x-s}{b^{t}},0)& \text{\e if  \e $z = t-1$} \\
\end{cases} }
	\notag
	\\[1.5ex]
	(x,z)\act \grg = (x,z)\gprod (1,0) &=
	(x+b^{z},z)
	\eqpnt
	\notag
\end{align}

Hence~$\Ac'$ is the quotient of~$\PpRp$ by~$\phi$ if one can mark~$Q$
according to these rules, starting from~$i$ with the mark~$(0,0)$,
without conflicts and in such a way that two distinct states have
distincts marks.
Such a marking is realised by a simple traversal of~$\Ac'$, thus in linear
time, and this concludes the proof of \rtheorem{pas-quo-lin}.

\begin{remark}\lremark{initial_shift}
  \rtheorem{pas-quo-lin} states that one can decide in linear time whether
    a \emph{given} automaton~$\Ac$ is a quotient of a Pascal automaton,
    and in particular~$\Ac$ has a fixed initial state
    that plays a crucial role in the verification process.

  The following proposition shows that the property (being a quotient of a
    Pascal automaton) is actually independent of the state chosen to be initial.
  If it holds for~$\Ac$, it also holds for any automaton derived from~$\Ac$ by
    changing the initial state.
  This is a general property that will be used in the
    general verification process described in the next section.
\end{remark}

\begin{proposition}\lproposition{initial_shift}
  If an automaton~$\Ac=\aut{Q,\Ab,\delta,i,T}$ is the quotient of~$\PpR$,
  then for every state~$q$ in~$Q$,~$\Ac_q$=$\aut{Q,\Ab,\delta,q,T}$ is the
  quotient of~$\Pasc{S}{p}$ for some set~$S$.
\end{proposition}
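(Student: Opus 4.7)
The plan is to compose the given covering $\phi\colon \PpR \to \Ac$ with a basepoint-shifting isomorphism of the underlying state graph, thereby producing a covering $\Pasc{S}{p} \to \Ac_q$ for a suitably chosen residue set $S$.

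First, I would pick any preimage $(x_0, y_0) \in \phi^{-1}(q)$, which exists because $\phi$ is surjective onto the accessible automaton $\Ac$. Next, I would set $S := b^{-y_0}(R - x_0) \bmod p$; this is well-defined since $p$ is coprime with $b$. Then I would introduce the shift map $\mu\colon \Zppsi \to \Zppsi$ given by $\mu(s,t) = (x_0 + b^{y_0} s,\, y_0 + t)$, which is precisely left multiplication by $(x_0, y_0)$ in the semidirect product $\sdpgp = \Zpsdpsi$ already used in \propo{can-cla}.

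The technical core of the argument consists of two elementary verifications. First, $\mu$ commutes with the right action of each digit: using~\equnm{pas-tra-1}, both $\mu((s,t) \cdot a)$ and $\mu(s,t) \cdot a$ reduce to $(x_0 + b^{y_0} s + a b^{y_0 + t},\, y_0 + t + 1)$, which is just the associativity of multiplication in $\sdpgp$. Second, the defining equivalence $x_0 + b^{y_0} s \in R \Leftrightarrow s \in S$ implies that $\mu$ maps $S \times \Z/\psi\Z$ bijectively onto $R \times \Z/\psi\Z$. Hence $\mu$ is an isomorphism from $\Pasc{S}{p}$ (with initial state $(0,0)$ and finals $S \times \Z/\psi\Z$) onto the automaton obtained from $\PpR$ by shifting the initial state to $(x_0, y_0)$ while keeping its finals.

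Setting $\chi := \phi \circ \mu$ then yields a map from $\Pasc{S}{p}$ to $\Ac_q$ that sends $(0,0)$ to $q$, commutes with letter transitions, satisfies $\chi(S \times \Z/\psi\Z) = \phi(R \times \Z/\psi\Z) = T$, and automatically enjoys the lifting property (ii) for coverings since $\Pasc{S}{p}$ is complete. Hence $\Ac_q$ is a quotient of $\Pasc{S}{p}$. I do not anticipate any substantial obstacle: the only conceptual point is that changing the basepoint in a Pascal automaton yields another Pascal automaton with a translated set of residues, which is precisely the content of the semidirect-product structure of its transition monoid.
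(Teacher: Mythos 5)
Your proposal is correct and takes essentially the same route as the paper: both arguments observe that re-basing the Pascal automaton at a preimage $(x_0,y_0)$ of $q$ yields $\Pasc{S}{p}$ with $S=\{(r-x_0)/b^{y_0} \mid r\in R\}$ via the correspondence $(s,t)\mapsto(x_0+b^{y_0}s,\,y_0+t)$, and then compose this isomorphism with the original covering. You simply carry out in detail the verification that the paper dismisses as ``easy to verify''.
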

\begin{proof}
  Since the morphism associated with a quotient does not depend on the initial
  state, it is enough to prove that changing the initial state of a Pascal 
  automaton yield   another Pascal automaton with the same period.

  It is then easy to verify that, if the new initial state is~$(s,t)$, the
  new automaton is equal to~$\Pasc{S}{p}$ 
    where~$S = \set{\frac{r-s}{b^t}~|~r\in R}$,
    the state~$(x,y)$ of ~$\Pasc{S}{p}$ corresponding to the 
    state~$(s+xb^t,y+t)$ of~$\Pasc{R}{p}$.
\end{proof}


\section{The \upcr}
\label{s.upc}%

Let
$\msp\Ac=\auta\msp$
be an automaton,
$\sigma$ the strong connectivity equivalence relation on~$Q$,
and~$\gamma$ the surjective map from~$Q$ onto~$Q/\sigma$.
The \emph{condensation}~$\Cond{\Ac}$ of~$\Ac$ is 
the directed acyclic graph with loops~$(V,E)$
such that~$V$ is the image of~$Q$ by~$\gamma$;
and the edge~$(x,y)$ is in~$E$ if there exists a transition~$\msp q\pathx{a}s \msp$ in~$\Ac$,
for some~$q$ in~$\gamma^{\uminus1}(x)$,~$s$ in~$\gamma^{\uminus1}(y)$ and~$a$ in~$A$.
The condensation of~$\Ac$ can be computed in linear time by Tarjan's
algorithm (\cf~\cite{CormEtAl09}).

We say that an SCC~$C$ of an automaton~$\mathcal{A}$ is \emph{embeddable} in another
  SCC~$D$ of~$\Ac$ if there exists an injective function~${f:C\rightarrow D}$ such that,
    for all~$q$ in~$C$ and~$a$ in~$A$: if~$\msp q\cdot a \msp$ is in~$C$ then $\msp f(q\cdot a) = (f(q)\cdot a)\msp$,
    and if~$q\cdot a$ is not in~$C$, then~$f(q)\cdot a = q \cdot a$.
\begin{definition}[The \upcr]
  Let~$\Ac$ be a complete deterministic automaton and~$\Cond{\Ac}$ its condensation.
  We say that~$\Ac$ satisfies the \upcr (or equivalently that~$\Ac$ is a \upau) if the following
  five conditions hold.
  \begin{description}
    \item[UP-0] The successor by~$0$ of a final (resp.
      non-final) state of~$\Ac$ is final (resp.  non-final).

    \item[UP-1] Every non-trivial SCC of~$\Ac$ that contains an internal transition
    labelled by a digit different from~0 is mapped by~$\gamma$ to a leaf of~$\Cond{\Ac}$.

    Such and SCC is called a \toscc.

    \item[UP-2]
    Every non-trivial SCC of~$\Ac$ which is not of Type~1:

      i) is a simple circuit labelled by 0 (or 0-circuit);

      ii) is mapped by~$\gamma$ to a vertex of~$\Cond{\Ac}$ which has a unique
          successor, and this successor is a leaf.

    \noindent Such an SCC is called a Type~2 SCC.

    \item[UP-3]
    Every \toscc is the quotient of a
    Pascal automaton~$\Pasc{R}{p}$, for some~$R$ and~$p$.

    \item[UP-4]
      Every \ttscc is embeddable in the unique \toscc associated with it by~(UP-2).
  \end{description}
  %
\end{definition}

It should be noted that~(UP-0) is not a specific condition, it is more of
a precondition (hence its numbering~0) to ensure that either all representations
of an integer are accepted, or none them are.
Moreover,~(UP-1) and~(UP-2) (together with the completeness of~$\Ac$) imply the
converse of~(UP-1), namely that every SCC mapped by~$\gamma$ to a leaf 
of~$\Cond{\Ac}$ is a \toscc.

The schematic representation of the \upcr at \figur{upc} allows to 
review items~1 to~4.
There are two levels of SCC's in the condensation; squares and ovals.
Squares are the \toscc's, leaf of~$\Cc$ (\upo).
Each of them is the quotient of a Pascal automaton
and as such are complete (\uptr).
Ovals are the \ttscc; each of them has for unique successeur a square 
(\upt)
and `behaves in the same manner'
as a circuit of~$0$'s from this square (dotted lines):
that is, every vertex of a \ttscc is associated with a vertex of a 
$0$-circuit of the \toscc and
two associated vertices have the same behaviour:
their respective successor by~$0$ are associated and they have the 
same successor by a non-$0$ digit (\upf).

\begin{example}
  \figur{upc-exa-1} shows a simple but complete example of a \upau.
  The framed subautomata are the minimisation of Pascal automata~$\pascal{\{1,2\}}{3}{2}$
  on the top and~$\pascal{\{1,2,3,4\}}{5}{2}$ on the bottom.
    The two others non-trivial SCC's,~$\{B_2,C_2\}$ and~$\{D_2\}$, are reduced to~0-circuits.
  Each of them has successors in only one Pascal automaton.

  The dotted lines highlight (UP-4).
    The circuit~$(B_2,~C_2)$ is embeddable in the Pascal automaton~$\set{A,~B,~C}$ 
    with the map~$B_2\mapsto B$ and~$C_2 \mapsto C$.
  A similar observation can be made for the circuit~$(D_2)$.
  
\end{example}

\SmallPicture
\begin{figure}[ht!]
  \centering
  \VCCall{up_example}
  \caption{A complete example of the \upcr}
  \lfigure{upc-exa-1}
\end{figure}

Completeness and correctness of the \upcr are established as
  follows.
\begin{enumerate}
 \item Every \upsn is accepted by a \upau;
 \item The \upcr is stable by quotient;
 \item Every \upau accepts a \upsn.
\end{enumerate}
The first two steps ensure completeness for minimal automata (as
  every \brec set of numbers is accepted by a \emph{unique minimal
  automaton)}, the third one plays for correctness.

\subsection{Every \upsn is accepted by a \upau}
\label{s.upc-cpl}

\begin{proposition}
	\label{p.up-com-1}
  For every integers~$m$ and~$p$ and for every set~$R$ of residues
    there exists a \upau accepting~$\EpRm$.
\end{proposition}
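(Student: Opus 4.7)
The plan is to exhibit an explicit UP-automaton accepting $\EpRm$ and verify each of the five items of the UP-criterion on it.

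I would first reduce to the case $\gcd(p,b)=1$. Write $p = p'\,d$ with $p'$ coprime to $b$ and $d$ dividing some power $b^K$ of $b$. Since $b^{\,j}\equiv 0 \pmod d$ for $j\ge K$, the residue $\val{u}\mod d$ is determined by the first $K$ letters of $u$, so $\EpRm$ decomposes into a finite disjoint union indexed by the residue $c=\val{u}\mod d$: for each such $c$, the condition on the suffix takes the form ``value lies in $\PSN{R_c}{p'}$ and the total value exceeds a certain threshold $m_c$'', with $R_c\subseteq\Z/p'\Z$ obtained via the Chinese Remainder Theorem. UP-automata for each tail can be glued onto a common prefix tree of depth $K$ (which contains no non-trivial SCC), so it is enough to treat the case $\gcd(p,b)=1$.

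Assuming $\gcd(p,b)=1$, I would take $\Ac = \mathcal{B}_m \times \ApR$, the product of the Pascal automaton $\ApR$ accepting $\EpR$ with a finite DFA $\mathcal{B}_m$ over $\Abe$ accepting $\{u\in\Abe : \val{u}>m\}$. The states of $\mathcal{B}_m$ fall into three groups: transient states $(v,\ell)$ with $v\le m$ and $\ell$ below a threshold $L$ chosen so that $b^L>m$, tracking exact value and length; dormant states $q_v$, one for each $v\le m$, merging all $(v,\ell)$ with $\ell\ge L$, each equipped with a $0$-self-loop and sending every non-$0$ digit to the absorbing state $\top$; and $\top$ itself. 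The product accepts exactly those words whose $\mathcal{B}_m$-coordinate is $\top$ and whose Pascal coordinate lies in $R\times\Z/\psi\Z$, i.e.\ exactly $\EpRm$.

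Verification of the UP-criterion is then routine. (UP-0) holds because reading $0$ preserves both $\val{u}$ (hence the $\mathcal{B}_m$-coordinate) and the first Pascal coordinate, on which finality depends. For (UP-1)--(UP-2), the only non-trivial SCC's of the product are (a) the slice $\{\top\}\times\ApR$, which is a leaf of $\Cond{\Ac}$ and is strongly connected via non-$0$ transitions, and (b) for each dormant $q_v$ and each first Pascal coordinate $s$, the $0$-circuit $\{(q_v,(s,t)) : t\in\Z/\psi\Z\}$ of length $\psi$, whose unique successor SCC is (a). Item (UP-3) is immediate since (a) is isomorphic to $\ApR$; and (UP-4) follows because each Type~2 circuit embeds into (a) via the projection that forgets the $\mathcal{B}_m$-coordinate, which commutes with both $0$- and non-$0$-transitions by construction of the product.

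The main technical obstacle I anticipate is the reduction step: fitting the prefix-DAG for the non-coprime part of $p$ together with the value tracker of $\mathcal{B}_m$ without introducing unwanted intermediate SCC's, so that the combined transient part yields a condensation in which Type~2 SCC's appear only just above Type~1 SCC's. A smaller point is choosing the threshold $L$ correctly so that the $(v,\ell)$ states of $\mathcal{B}_m$ can safely be collapsed into the dormant $q_v$'s; this is needed both to keep $\mathcal{B}_m$ finite and to ensure that the Type~2 SCC's actually materialise in the product.
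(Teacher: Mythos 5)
Your proof is correct and follows essentially the same route as the paper's: a Chinese-remainder split of~$p$ into a part dividing a power of~$b$ (handled by a complete prefix $b$-tree) and a part coprime to~$b$ (handled by Pascal automata), combined by a product with a threshold automaton whose tail states carry $0$-loops, followed by a routine verification of the five conditions of the \upcr. The only difference is cosmetic: the paper forms one global product~$\PSA{R}{p}\times\GA{m}$ instead of attaching a product of a Pascal automaton with a threshold automaton to each leaf of the prefix tree.
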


\subsubsection{When the period divides a power of the base}
\lsection{upc-div-base}
Let~$\EpR$ be a periodic set of numbers such that~$p\div b^j$ for some~$j$.
An automaton accepting~$\EpR$ is obtained by a generalisation of the method
  for recognising if an integer written in base~$10$ is a multiple of~$5$,
  namely checking if its unit digit is~$0$ or~$5$:
  from~\equnm{val} follows:

\begin{lemma}\llemma{pdivbj}
  Let~$d$ be an integer such that~$d\div b^j$
    (and~$d\notdiv b^{j-1}$) and~$u$ in~$\Abe$ of length~$j$.
  Then,~$w$ in~$\Abe$ is such that~$\mdd{\val{w}}{\val{u}}$ \ifof
    $w=u\xmd v$ for a certain~$v$.
\end{lemma}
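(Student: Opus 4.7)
The proof rests on the concatenation identity $\val{u\xmd v}=\val{u}+b^{\wlen{u}}\val{v}$, which follows by a routine induction on $\wlen{v}$ from \equat{val}. This identity converts statements about the value of a word modulo~$d$ into statements about its prefixes, which is exactly the translation needed here.

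The implication from right to left is direct. Suppose $w=u\xmd v$ for some $v\in\Abe$; then $\val{w}=\val{u}+b^{j}\val{v}$, and since $d\div b^{j}$ by hypothesis the second summand vanishes modulo~$d$, giving $\mdd{\val{w}}{\val{u}}$.

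For the converse I plan to show that $u$ must appear as a prefix of~$w$, distinguishing two cases according to $\wlen{w}$. When $\wlen{w}\geq j$, I write $w=u'\xmd v$ with $\wlen{u'}=j$; applying the forward direction to the factorisation $w=u'\xmd v$ gives $\mdd{\val{w}}{\val{u'}}$, so the hypothesis $\mdd{\val{w}}{\val{u}}$ becomes $\mdd{\val{u'}}{\val{u}}$, and the task reduces to deducing $u'=u$ from a congruence between two length-$j$ words modulo~$d$. When $\wlen{w}<j$, the value $\val{w}$ lies strictly below $b^{j-1}$; the assumption $d\notdiv b^{j-1}$ is invoked here to rule out that such a short $w$ can realise the residue $\val{u}\bmod d$ without coming from an extension of~$u$ in the previous case.

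The main obstacle is the implication $\mdd{\val{u'}}{\val{u}}\Longrightarrow u'=u$ for two length-$j$ words, since the divisibility hypothesis $d\div b^{j}$ by itself does not force this: a priori two distinct words of length $j$ may share a residue modulo~$d$. The minimality condition $d\notdiv b^{j-1}$ is the crucial extra ingredient, and my plan is to leverage it through an induction on~$j$. The final letter of~$u$ and of~$u'$ contributes a multiple of $b^{j-1}$ to the respective values, and the condition $d\notdiv b^{j-1}$ is precisely what is needed to pin down these last digits modulo~$d$; once they are shown to agree, \equat{val} reduces the congruence to the length-$(j-1)$ prefixes, to which the inductive hypothesis applies, completing the argument by descent.
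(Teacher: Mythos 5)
Your right-to-left direction is correct and is essentially everything the paper itself extracts from \equnm{val}: writing $w=u\xmd v$ gives $\val{w}=\val{u}+b^{j}\val{v}$, and $d\div b^{j}$ kills the second summand modulo~$d$. Equivalently, for $\wlen{w}\geq j$ the residue $\val{w}\bmod d$ is a \emph{function} of the length-$j$ prefix of~$w$, and that is the only fact the subsequent construction of~$\PSA{R}{p}$ uses (indeed, in \examp{R-18-3} several distinct leaves of the $b$-tree are routed into the same~$\GPasc{T_x}{k}$). The paper gives no further proof of the lemma, so on this half you and the paper agree.

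The converse is where there is a genuine gap, and it is not one you can close: the step you single out as the main obstacle --- deducing $u'=u$ from $\mdd{\val{u'}}{\val{u}}$ for two words of length~$j$ --- is simply false whenever $d<b^{j}$, and the hypothesis $d\notdiv b^{j-1}$ does not exclude that. Take $b=10$, $d=5$, $j=1$ (so $d\div b^{j}$ and $d\notdiv b^{j-1}$), $u=0$ and $u'=5$: the values $0$ and $5$ are congruent modulo~$5$ yet $u'\neq u$. This is exactly the paper's own motivating example (a multiple of~$5$ in base~$10$ is recognised by a units digit equal to~$0$ \emph{or}~$5$), so the ``only if'' of the lemma as printed is an overstatement, and no induction on~$j$ can rescue it: the condition $d\notdiv b^{j-1}$ only makes~$j$ minimal with $d\div b^{j}$, it does not make the map from length-$j$ words to residues modulo~$d$ injective (that happens precisely when $d=b^{j}$, e.g.\ when $b$ is prime). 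Your short-word case fails for the same reason: with $b=2$, $d=4$, $j=2$, the word $w=0$ satisfies $\mdd{\val{w}}{\val{00}}$ without being an extension of $u=00$. Finally, even setting the truth of the target aside, your treatment of the converse is announced as a plan rather than carried out, which would itself be a gap. The statement you can and should prove --- and all that the automaton construction needs --- is the one-directional version: if $w=u\xmd v$ with $\wlen{u}=j$, then $\val{w}$ and $\val{u}$ are congruent modulo every divisor of~$b^{j}$.
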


\rfigure{aut_1mod4_2} 
shows an example of such a construction.
\begin{figure}[ht!]
  \centering
  \VCCall{aut_1mod4_2}
  \caption{Automaton accepting integers congruent to 1 modulo 4 en base~2}
  \label{f.aut_1mod4_2}
\end{figure}

\subsubsection{The case of periodic sets of numbers}
\label{s.upc-per}

Let~$\EpR$ be a periodic set of numbers.
In contrast with \secti{def-pas},
$p$ and~$b$ are not supposed
to be coprime anymore.
Given a integer~$p$, there exist~$k$ and~$d$ such that
$p=k\xmd d$,
$k$ and~$b$ are coprime,
and~$d\div b^{j}$ for a certain~$j$.
The Chinese remainder theorem, a simplified version of which is given
below, allows to break the condition:
`being congruent to~$r$ modulo~$p$' into two simpler conditions.

\begin{theorem}[Chinese remainder theorem]\ltheorem{chinese}
Let~$k$ and~$d$ be two coprime integers.
Let~$r_k$,~$r_d$ be two integers.
There exists a unique integer~$r<k\xmd d$ such
that~$r \equiv r_k\xmd[k]$ and~$r \equiv r_d \xmd [d]$.

Moreover, for every~$n$ such that~$n \equiv r_k \xmd[k]$
and~$n \equiv r_d \xmd[d]$, we have~$n \equiv r \xmd[kd]$.
\end{theorem}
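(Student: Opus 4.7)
The plan is to establish the classical Chinese remainder theorem in three short steps: existence of some $r$ in $\{0,\ldots,kd-1\}$ satisfying the two congruences, uniqueness of such an $r$, and the ``moreover'' clause that any integer $n$ meeting the two separate congruences in fact coincides with $r$ modulo $kd$.

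For existence, the key tool is B\'ezout's identity: since $\gcd(k,d)=1$, there are integers $u,v$ with $uk+vd=1$. Setting $r_0 = r_d\xmd uk + r_k\xmd vd$ and reducing modulo $k$ gives $r_0 \equiv r_k\xmd vd \equiv r_k\xmd(1-uk) \equiv r_k\xmd[k]$, and symmetrically $r_0\equiv r_d\xmd[d]$. Take $r$ to be the unique representative of $r_0$ modulo $kd$ in $\{0,\ldots,kd-1\}$; the congruences are preserved since $k$ and $d$ both divide $kd$.

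For uniqueness, suppose $r$ and $r'$ both lie in $[0,kd)$ and satisfy the two congruences. Then $k$ and $d$ both divide $r-r'$; because $k$ and $d$ are coprime, $kd$ also divides $r-r'$. Since $|r-r'|<kd$, this forces $r=r'$. The ``moreover'' part follows by the same divisibility argument applied to $n-r$ instead of $r-r'$: from $n\equiv r\xmd[k]$ and $n\equiv r\xmd[d]$ one concludes $kd\div(n-r)$, hence $n\equiv r\xmd[kd]$.

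There is essentially no obstacle here; the only nontrivial step is the passage from ``$k\div x$ and $d\div x$'' to ``$kd\div x$'', which is a standard consequence of coprimality (e.g.\ via Gauss's lemma or directly from B\'ezout). The statement is invoked in the paper only to decompose a single congruence modulo $p=k\xmd d$ into one modulo $k$ and one modulo $d$, for which the above existence/uniqueness pair is exactly what is needed.
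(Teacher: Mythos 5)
Your proof is correct: the Bézout-based construction $r_0 = r_d\,uk + r_k\,vd$ does satisfy both congruences, and the uniqueness and ``moreover'' clauses follow as you say from the fact that $k\div x$ and $d\div x$ with $\gcd(k,d)=1$ imply $kd\div x$. The paper states this classical theorem without any proof (it is invoked only to split a congruence modulo $p=k\,d$ into congruences modulo $k$ and modulo $d$), so there is nothing to compare against; your argument is the standard one and fully adequate.
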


Let us assume for now that~$R$ is a singleton~$\{r\}$,
with~$r$ in~$\pterval$ and define~$r_d = (r \mod d)$
and~$r_k = (r \mod k)$.
\rtheorem{chinese} implies:
\begin{equation}
	\fa n\in\N \quantsp
	\mdp{n}{r}
	\ee\Longleftrightarrow\ee
	\mdk{n}{r_{k}}
	\e\text{and}\e
	\mdd{n}{r_{d}}
	\eqpnt
	\label{q.chi-1}
\end{equation}

The Pascal automaton~$\Pasc{r_{k}}{k}$ accepts the integers satisfying
$\mdk{n}{r_{k}}$ and an automaton accepting the integers satisfying
$\mdd{n}{r_{d}}$ can be defined from Lemma~\rlemma*{pdivbj}.
The \emph{product} of the two automata accepts the integers satisfying 
both equations of the right-hand side
of~\equnm{chi-1} and this is a \upau.

\begin{example} \label{e.5-12-2}
The following figures show the construction of an automaton accepting 
the set of representations in base~$2$ of the integers congruent 
to~$5$ modulo~$12$.
\figur{aut_2mod3_2m} shows~$\Pasc{2}{3}$, minimised for clarity,
\figur{aut_1mod4_2} shows an automaton accepting integers congruent 
to~$1$ modulo~$4$,
and
\figur{aut_5mod12_2} shows the product of the preceeding two, which 
accepts the required set of numbers.

\end{example}

\begin{figure}[ht!]
  \centering
  \VCCall{aut_2mod3_2m}
  \caption{The minimisation of~$\Pasc{2}{3}$}
  \label{f.aut_2mod3_2m}
\end{figure}

\begin{figure}[ht!]
  \centering
  \VCCall{aut_5mod12_2}
  \caption{Automaton accepting integers congruent to 5 modulo 12 in base~2}
  \label{f.aut_5mod12_2}
\end{figure}

Let now~$R=\{r_1,r_2,\ldots,r_{\ell}\}$ be a subset of~$\pterval$.
In order to build an automaton~$\PSA{R}{p}$ that accepts~$\EpR$, let
$S=\{(r_{1,d},r_{1,k}),(r_{2,d},r_{2,k}),\ldots,(r_{\ell,d},r_{\ell,k})\}$ 
be the set of pairs~$(r_{i,d},r_{i,k})$ such that an integer~$n$ is 
congruent to~$r_{i}$ modulo~$p$ \ifof both
$\mdk{n}{r_{i,k}}$ and~$\mdd{n}{r_{i,d}}$.  

For every~$x<d$, let~$T_{x}=\Defi{r_{i,k}}{x=r_{i,d}}$, which means 
that if~$\mdd{n}{x}$ then~$n$ is in~$\EpR$ \ifof it is congruent to 
some~$t$ in~$T_{x}$ modulo~$k$.
It may be the case that for some~$x$,~$(k,T_{x})$ are not the 
canonical parameters for~$\PSN{T_{x}}{k}$.
An automaton that accepts~$\PSN{T_{x}}{k}$ is thus written as a 
generalised Pascal automaton~$\GPasc{T_{x}}{k}$.

The automaton~$\PSA{R}{p}$ consists then in a complete~$b$-tree of 
depth~$j$, whose~$b^{j}$ leaves are replaced by generalised Pascal 
automata.
More precisely, the word~$u$ of length~$j$ reaches the state~$q$ 
of~$\GPasc{T_{x}}{k}$ where~$\mdd{\val{u}}{x}$  
and~$q$ is defined by~$(0,0)\pathaut{u}{\GPasc{T_{x}}{k}}q$.
It is a routine to verify that the automaton constructed in such a 
way accepts~$\EpR$ and satisfies the \upcr.

\begin{example}
  \label{e.R-18-3}
\figur{R-18-3} shows the construction of an automaton accepting 
the set of representations in base~$3$ of the integers congruent 
to~$0$,~$2$,~$4$,~$5$, and~$9$ modulo~$18$.
The various parameters are:

~$p=18$,~$k=2$,~$d=9$,~$j=2$;
\ee
 $R=\{0,2,4,5,9\}$;
\ee
 $S=\{(0,0),(2,0),(4,0),(5,1),(0,1)\}$;

 $T_0=\{0,1\}$ since both~$(0,0)$ and~$(0,1)$ are in~$S$;

 $T_1=T_3=T_6=T_7=T_8=\emptyset$;
\ee
 $T_2=T_4=\{0\}$;
\ee
 and~$T_5=\{1\}$.
\end{example}

\begin{figure}[ht!]
  \centering
  \VCCall{aut_02459mod18_3}
  \caption{Automaton accepting~$n\equiv 0,2,4,5,9 \mod 18$ in base~$3$}
  \label{f.R-18-3}
\end{figure}

\subsubsection{The case of arbitrary UP-sets of numbers}
\label{s.upc-u-per}

Let us denote by~$\Dc_m$ the automaton accepting words whose
  value is greater than~$m$.
It consists in a complete~$b$-tree~$T_m$ of depth~$\ceil{log_{b}(m)}$ plus a final
  sink state.
Every state may be labelled by the value of the word reaching it and
  it is final if its label is greater than~$m$.
Additionally, every leaf of~$T_m$ loops onto itself by reading a~$0$ and reaches the
  sink state by reading any other digit.
\begin{example}
  \figur{aut_sup5_2} shows~$\Dc_{5}$ (in base~$2$).
\end{example}
\begin{figure}[ht!]
  \centering
  \VCCall{aut_sup5_2}
  \caption{Automaton accepting~$n \geq 5$ in base~$2$}
  \label{f.aut_sup5_2}
\end{figure}
Every~$\Dc_m$ is obviously a \upau.

An arbitrary \upsn~$\EpRm$ is accepted by
the product~$\PSA{R}{p} \times \GA{m}$, denoted by~$\UPSA{R}{p}{m}$.
The very special form of~$\Dc_{m}$ makes it immediate that this
product is a \upau, and this complete the proof of \propo{up-com-1}.

\begin{example}
The following figures show the construction of the 
  automaton~$\UPSA{0}{24}{1}$ accepting 
non-negative integers congruent to 0 modulo 24.

\rfigure{general_case_before_product1} shows the 
  automaton~$\Dc_1$,
\rfigure{general_case_before_product2} shows the 
  automaton~$\PSA{0}{24}$
and \rfigure{general_case_after_product} shows their 
  product,~$\UPSA{0}{24}{1}$.

\end{example}

\begin{table}[ht!]
  \noindent\hspace*{-\oddsidemargin}
  \begin{minipage}{\textwidth+2\oddsidemargin}
    \noindent\hfill\begin{minipage}{0.20\linewidth}~
    \end{minipage}%
    \hfill\begin{minipage}{0.60\linewidth}%
        \VCCall{general_case_before_product2}
        \captionof{figure}{$\PSA{0}{24}$}
        \label{f.general_case_before_product2}

    \end{minipage}\hfill~%
    
    \noindent\hfill\begin{minipage}{0.20\linewidth}
      \VCCall{general_case_before_product1}     
      \captionof{figure}{$\Dc_{1}$} 
      \label{f.general_case_before_product1}
    \end{minipage}%
    \hfill\begin{minipage}{0.60\linewidth}%
      \VCCall{general_case_after_product}
      \captionof{figure}{$\Dc_{1} \times \PSA{0}{24} = \UPSA{0}{24}{1}$}
      \label{f.general_case_after_product}                 
  \end{minipage}\hfill~%
  \end{minipage}
  
\end{table}

\subsection{The \upcr is stable by quotient}
\label{s.upc-quo}

\begin{proposition}\label{p.completeness2}
  If~$\Ac$ is a \upau, then every quotient of~$\Ac$ is also a \upau.
\end{proposition}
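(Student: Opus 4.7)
My plan is to verify each of the five conditions of the \upcr for the quotient $\Bc$, by lifting structures through the covering $\phi \colon \Ac \to \Bc$. The key preliminary is that for any non-trivial SCC $D$ of $\Bc$ the preimage $\phi^{-1}(D)$ contains at least one non-trivial SCC of $\Ac$ (lift any cycle of $D$ to a preimage and iterate on the finite $\phi$-fibre), and that any Type~1 SCC $C^{*}$ inside $\phi^{-1}(D)$ satisfies $\phi(C^{*}) = D$: being a leaf of $\Cond{\Ac}$ its image is closed under the transitions of $\Bc$, and as a strongly-connected closed subset of the SCC $D$ it must equal $D$.

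Condition (UP-0) is immediate from $\behavement{\Ac} = \behavement{\Bc}$, since (UP-0) for $\Ac$ is the language identity $u \in \behavement{\Ac} \Longleftrightarrow u \cdot 0 \in \behavement{\Ac}$, which transfers to any accessible automaton for the same language. For \upo, starting from any non-trivial SCC in $\phi^{-1}(D)$ and lifting a non-$0$ internal transition of $D$, I produce a Type~1 SCC $C^{*} \subseteq \phi^{-1}(D)$ (either this SCC itself or its Type~1 successor if it is Type~2); since $C^{*}$ is a leaf with $\phi(C^{*}) = D$, every transition from $D$ lifts to a transition inside $C^{*}$, so $D$ has no outgoing transitions and is itself a leaf. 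For \upt, if $D$ has no non-$0$ internal transition then every non-trivial SCC in $\phi^{-1}(D)$ must be Type~2 (a Type~1 SCC would produce such a transition in $D$), and determinism of the $0$-action on $D$ makes $D$ a $0$-circuit; the unique-successor property follows because each $r \cdot a$ (for $r \in D$, $a \neq 0$) is single in $\Bc$, forcing the images of the Type~1 successors of the various Type~2 $C_{i} \subseteq \phi^{-1}(D)$ to land in one and the same SCC of $\Bc$.

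For \uptr, given $D^{*}$ Type~1 in $\Bc$, the preliminary supplies a Type~1 SCC $C^{*}$ of $\Ac$ with $\phi(C^{*}) = D^{*}$. By \uptr for $\Ac$, $C^{*}$ is a quotient of a Pascal automaton $\Pasc{R}{p}$ via a covering $\alpha$; composing $\alpha$ with the surjective morphism $\phi|_{C^{*}}$ yields a covering $\Pasc{R}{p} \to D^{*}$, whence $D^{*}$ is itself a Pascal quotient. The required surjectivity on final states follows from the union-of-residue-classes structure of the final states in a Pascal quotient.

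The delicate step is \upf. For $D$ Type~2 with Type~1 successor $D^{*}$, pick a Type~2 SCC $C \subseteq \phi^{-1}(D)$ with \upf embedding $f_{C} \colon C \to C^{*}$, and set $f(r) := \phi(f_{C}(q))$ for any $q \in \phi^{-1}(r) \cap C$. The embedding identities follow directly once $f$ is well-defined: $f(r) \cdot a = \phi(f_{C}(q) \cdot a) = \phi(q \cdot a) = r \cdot a$ for $a \neq 0$, and $f(r \cdot 0) = \phi(f_{C}(q) \cdot 0) = f(r) \cdot 0$. The main obstacle is well-definedness: two preimages $q_{1}, q_{2} \in C$ of the same $r$ differ by a multiple $m \cdot |D|$ along the $0$-cycle of $C$, so the issue reduces to showing that $\phi(f_{C}(q))$ lies on a $0$-orbit of $D^{*}$ of length dividing $|D|$. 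I would address this via the following structural fact about Pascal quotients: two states of $D^{*}$ with matching non-$0$ outgoing actions must coincide --- verified directly in $\Pasc{R}{p}$ (where $(x, y)$ is recovered from $(x, y) \cdot 0$ and $(x, y) \cdot 1$) and propagated to quotients using the semidirect-product structure of the transition monoid exhibited in \secti{rec-pas}. Since $\phi(f_{C}(q))$ and $\phi(f_{C}(q)) \cdot 0^{|D|}$ both send each $a \neq 0$ to $r \cdot a$, they coincide in $D^{*}$; injectivity of $f$ follows by the same principle applied to distinct $r \in D$.
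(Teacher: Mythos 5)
Your overall strategy---lifting the SCC's of the quotient $\Bc$ back through the covering $\phi$ and transferring each condition one by one---is the same as the paper's, which packages the lifting into a lemma (every SCC of $\Bc$ is the image of an SCC of $\Ac$ whose internal transitions lift) and then declares \upo\ to \upf\ immediate. Your handling of (UP-0), \upo, \upt\ and \uptr\ is essentially sound (in \upo\ the parenthetical claim that the Type~1 successor of a Type~2 SCC lying in $\phi^{-1}(D)$ again lies in $\phi^{-1}(D)$ needs the explicit path-lifting argument, but it is true), and your well-definedness argument for \upf, resting on the fact that every letter acts injectively on the group automaton $D^{*}$, is correct and is exactly the right tool there.

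The injectivity of $f$ is a genuine gap. The principle ``two states with matching non-$0$ outgoing actions coincide'' holds in $D^{*}$ because $D^{*}$ is a group automaton; it does not hold in $D$, which is a $0$-circuit on which no non-$0$ letter acts internally, so it cannot be ``applied to distinct $r\in D$''. What you actually need is that distinct states of $D$ have distinct non-$0$ successors, and this can fail: the covering may identify states of $C^{*}$ without identifying the corresponding states of $C$, so that $\phi\circ f_{C}$ maps $D$ onto a $0$-orbit of $D^{*}$ strictly shorter than $|D|$. For instance, with $b=2$, let $C^{*}=\Pasc{\{1,2\}}{3}$, let $C$ be the two-state $0$-circuit embedded on the $0$-orbit of $(0,0)$, and let $\phi$ identify $(x,y)$ with $(2x,y+1)$ inside $C^{*}$ while staying injective on $C$: in the quotient the class of $(0,0)$ is fixed by $0$, both states of $\phi(C)$ have the same successor by $1$, and since $1$ permutes $\phi(C^{*})$, every candidate embedding must send both of them to the class of $(0,0)$. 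Thus \upf\ can genuinely fail for a non-minimal quotient, and no argument will close the gap as you have set it up; one has to restrict attention to the minimal quotient (which is all that is used to establish completeness, and where injectivity does follow, since two distinct states of $D$ with equal non-$0$ successors and equal finality along the $0$-circuit would be Nerode-equivalent) or else weaken the injectivity requirement in the definition of embeddability. The paper's own one-line proof is silent on precisely this point.
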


The \upcr relies on properties of SCC's that are stable by quotient.
The proof of \rproposition{completeness2} then consists essentially of proving that SCC's
are mapped into SSC's by the quotient.

\begin{lemma}\llemma{scc-quot} 
  Let~$\Ac$ and~$\Bc$ be two 
    deterministic\footnote{If the automaton is not deterministic, 
                            it would work as well, with the appropriate
                            definition of covering (see \cite{Saka09}).}
    finite automata, and~$\phi$ a covering~$\Ac \mapsto \Bc$.\nopagebreak
  
  Every SCC of~$\Bc$ is the quotient by~$\phi$ of a SCC of~$\Ac$.
  
\end{lemma}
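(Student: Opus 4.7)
The plan is a lifting argument leveraging the covering property of~$\phi$, combined with the finiteness of~$\Ac$ to apply pigeonhole.

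First, I fix an SCC~$D$ of~$\Bc$ and a state~$r\in D$. Since~$D$ is strongly connected, I can build a word~$w\in A^{*}$ such that the path in~$\Bc$ starting at~$r$ and labelled by~$w$ returns to~$r$ while visiting every state of~$D$ at least once (for instance, by concatenating shortest paths between consecutive states of some enumeration of~$D$). The infinite word~$w^{\omega}$ then labels an infinite path in~$\Bc$ that stays inside~$D$ and visits every state of~$D$ infinitely often.

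Second, I pick any~$q\in\phi^{-1}(r)$. By repeatedly invoking the covering condition of~$\phi$, the word~$w^{\omega}$ labels an infinite path~$q=q_0,q_1,q_2,\ldots$ in~$\Ac$, with~$\phi(q_n)$ equal to the $n$-th state of the corresponding path in~$D$. Since the SCCs of~$\Ac$ form a finite DAG, this path eventually enters and stays inside a single SCC~$C$ of~$\Ac$; let~$N$ be such that~$q_n\in C$ for every~$n\geq N$. One then verifies~$\phi(C)=D$: for~$D\subseteq\phi(C)$, every state of~$D$ equals~$\phi(q_n)$ for infinitely many~$n$, and hence for some~$n\geq N$; for~$\phi(C)\subseteq D$, any~$s\in C$ is strongly connected in~$\Ac$ to~$q_N$, so~$\phi(s)$ is strongly connected to~$\phi(q_N)\in D$ in~$\Bc$, forcing~$\phi(s)\in D$.

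Finally, to identify~$D$ with the quotient of~$C$ by~$\phi|_{C}$, I would choose~$C$ to be minimal in the DAG order among SCCs of~$\Ac$ projecting onto~$D$---such an SCC exists since we just exhibited one, and~$\Ac$ is finite. Minimality then ensures that, for any~$q\in C$ and any~$a\in A$ with~$\phi(q)\cdot a\in D$, the lifted successor~$q\cdot a$ (which exists by the covering property of~$\phi$) lies again in~$C$: otherwise it would sit in a proper descendant SCC~$C'$ of~$C$, and applying the same lifting argument inside~$C'$ would produce a further descendant SCC projecting onto~$D$, contradicting the minimality of~$C$. I expect this last step---turning~$\phi|_{C}$ into an honest covering between the sub-automata induced by~$C$ and~$D$---to be the main subtlety, whereas producing a~$C$ with~$\phi(C)=D$ is a routine lifting-plus-pigeonhole computation.
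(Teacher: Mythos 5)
Your proof is correct and follows essentially the same strategy as the paper's: first exhibit an SCC of~$\Ac$ whose image under~$\phi$ is the given SCC~$D$ of~$\Bc$, then pick such an SCC that is minimal in the condensation order and argue that minimality forces every lifted internal transition of~$D$ to stay inside it. The only divergence is in the existence step, where the paper extracts a minimal reachability class inside~$\phi^{-1}(D)$ whereas you lift the infinite path labelled~$w^{\omega}$ and let it stabilise in the finite DAG of SCCs of~$\Ac$; both variants rest on the same two ingredients, namely the lifting property of the covering and the finiteness of~$\Ac$.
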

\begin{proof}
  \begin{enumerate}
    \item \emph{ If two vertex x and y are strongly connected in~$\Ac$,
          then~$\phi(x)$ and~$\phi(y)$ are strongly connected in~$\Bc$}

    This is a direct consequence of the morphism.

    \item \emph{For every SCC~$C'$ of~$\Bc$, there exist a SCC~$C$ of~$\Ac$
      such that~${\phi(C)=C'}$.}

    Let ~$S$ its inverse image of~$C'$ by~$\phi$.

    \begin{enumerate}
%

      \item \emph{There exists a strongly connected set of states~$V$ contained 
      in~$S$, with no edge going from~$V$ to~$S\backslash V$.}

      One can order~$S$ with the reachability relation.
      Since~$S$ is finite, there is a minimal equivalency class.
      We denote this set of states by~$V$.

      \item \emph{$\phi(V)$ is equal to~$C'$.}

      For all vertex~$x$ in~$S$ and~$y'$ in~$C'$,~$x$ can reach (without 
      leaving~$S$) some vertex of~$\phi^{\uminus1}(y')$,
      a direct consequence of morphism.
      If~$x$ is taken in~$V$, it means that~$y'$ is in~$\phi(V)$, for all~$y'$ 
      in~$C'$, hence~$\phi(V)=C'$
    \end{enumerate}
    Let us denote by~$C$ the SCC containing~$V$.
    Since~$\phi(C)$ is strongly connected (from (1)) and contains~$C'$,
    then~$\phi(C)=C'$.

    \item \emph{For every SCC~$C'$ of~$\Bc$, there exist a SCC~$C$ of~$\Ac$
          such that~$\phi_{|C}$ induces a morphism from~$C$ to~$C'$.}

      We denote by~$\Gamma$ the set of SCC of~$\Ac$ whose image by~$\phi$ 
      is~$C'$.
      Since~$\Gamma$ is not empty(from (b)), there exists a 
      SCC~$C$ of~$\Gamma$ that cannot reach any other SCC of~$\Gamma$.
      It follows that every internal transition of~$C'$ is also internal 
      in~$C$, hence~$\phi_{|C}$ induces a morphism from~$C$ to~$C'$.
    \end{enumerate}
\vspace*{-1.25em}
\end{proof}

\begin{proof}[Proof of Proposition \ref{p.completeness2}] 
  Let~$\Ac$ be a \upau and~$\Bc$ one of its quotients.  
  The very definition of a quotient implies that~$\Bc$ satifies (UP-0).

  \rlemma{scc-quot} forces every Type 1 (resp. Type 2) SCC of~$\quot{\Ac}$ to be
    the quotient by~$\phi$ of a Type 1 (resp. Type 2) SCC of~$\Ac$. 
  Proving that~$\Bc$ satifies~(UP-1) up to~(UP-4),  is then immediate.
\end{proof}

\subsection{Every \upau accepts a \upsn}{\label{s.upc-cor}}

Let~$\Ac$ be a \upau and~$\Cond{\Ac}$ its condensation.
We call \emph{branch} of~$\Cond{\Ac}$ any path going from the root
  to a leaf using no loops. There is finitely many of them.
The inverse image by~$\gamma$ of a branch of~$\Cond{\Ac}$ define a subautomaton
of~$\Ac$.
Since a finite union of \upssn is still \up,
it is sufficient to prove the following statement.

\begin{proposition}\lproposition{correctness2}
  Let~$\Ac$ be a \upau and~$\Cond{\Ac}$ its condensation. 
  The inverse image by~$\gamma$ of a branch of~$\Cond{\Ac}$ accepts
    a \upsn.
\end{proposition}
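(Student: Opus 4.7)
The plan is to describe the subautomaton induced by a branch explicitly, decompose its accepted language into finitely many pieces each giving a \upsn, and conclude using the fact that a finite union of \upssn is itself \up.

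I would begin by analysing the shape of a branch~$B$ of~$\Cond{\Ac}$. By~\upo and~\upt, the only non-trivial SCCs on~$B$ are its leaf, a \toscc which I denote~$L$, and, possibly, a unique \ttscc~$S$ lying immediately above~$L$; every other SCC on~$B$ is a singleton without self-loop. Consequently~$\gamma^{\uminus1}(B)$ decomposes as a finite ``tree'' of trivial states, optionally followed by~$S$, and terminating in~$L$. Since the tree part is finite there is a finite prefix-free set~$W$ of words leading from~$i$ to the first state outside the tree; each~$u\in W$ determines an entry state~$q_u\in L\cup S$.

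Up to the finitely many words accepted entirely within the tree (which contribute a finite, hence \up, set of values), the accepted language of the branch-subautomaton is then the finite union~$\bigcup_{u\in W}u\cdot K_{q_u}$, where~$K_{q_u}$ is the set of continuations accepted from~$q_u$. Writing~$E_q$ for the set of values of continuations accepted from a state~$q$, when~$q_u\in L$ the property~\uptr together with \propo{initial_shift} gives that~$E_{q_u}$ is a periodic set of the form~$\EpR$; the contribution of this entry to the accepted set of numbers is therefore~$\val{u}+b^{\wlen{u}}\cdot E_{q_u}$, which is \up (of period~$b^{\wlen{u}}p$ and threshold~$\val{u}$).

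For an entry with~$q_u\in S$, I would invoke~\upf: the embedding~$f\colon S\to L$ carries internal $0$-transitions of~$S$ to $0$-transitions of~$L$ and matches the non-$0$ exits from every~$q\in S$ with those from~$f(q)$. Consequently every word of the form~$0^k a v$ with~$a\neq 0$ is accepted from~$q_u$ in the branch-subautomaton iff it is accepted from~$f(q_u)$ in~$L$; the remaining candidates, namely the words in~$0^*$, all have value~$0$. Hence~$E_{q_u}$ differs from~$E_{f(q_u)}$ by at most the element~$0$ (the discrepancy being controlled by the finalities of the two $0$-circuits involved, which are consistent by~\upo), and~$E_{q_u}$ remains \up. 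A finite union of \upssn being \up then concludes the proof; the main technical point is precisely this reduction from~$S$ to its image in~$L$, which requires combining~\upf with~\upo.
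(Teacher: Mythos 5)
Your proof is correct and follows essentially the same route as the paper's: split each accepted word into a prefix that reaches the leaf \toscc followed by a continuation there, invoke (UP-3) together with \propo{initial_shift} to obtain a periodic value set from the entry state, translate by $\val{u}+b^{\wlen{u}}(\cdot)$, and neutralise the \ttscc via (UP-4) and (UP-0). Your treatment is in fact slightly more careful on two points that the paper glosses over --- you allow a prefix-free \emph{set} of entry words where the paper asserts a single ``shortest and unique'' word $u$, and you compare $E_{q_u}$ with $E_{f(q_u)}$ in place rather than using the paper's adjust-finality-then-minimise reduction --- but these are presentational, not substantive, differences.
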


\begin{proof}

Without loss of generality 
(that is, up to a finite number of elements in~$\ASNu{\Ac}$), 
one can assume that all final states belong to the SCC's of~$\Ac$.
Moreover, if~$\Ac$ has both a Type~1 and a Type~2 SCC, one can 
assume, by (UP-4) and up to the addition or subtraction of \emph{one} 
element in~$\ASNu{\Ac}$, that the \ttscc has the same final or 
non-final status as its image in the \toscc~$\Sc$ of~$\Ac$, and then, 
by minimisation, that~$\Ac$ has no \ttscc.

Let~$u$ be the shortest (and unique) word that sends the initial 
state~$i$, 
the root of~$\Cond{\Ac}$, into~$\Sc$: 
$i\pathaut{u}{\Ac}j$.
Let~$\Sc_{j}$ be the automaton obtained from~$\Sc$ by taking~$j$ as 
initial state.
By (UP-3) (and \propo{initial_shift})~$\Sc_{j}$ is a quotient of a Pascal 
automaton and accepts a periodic set of numbers~$\EpR$.

Every~$w$ in~$\compA$ is of the form~$w=u\xmd v$ with~$v$ 
in~$\CompAuto{\Sc_{j}}$.
Hence
$\val{w}= \val{u} + b^{\wlen{u}}\val{v}$
and~$w$ is in~$\compA$ \ifof
$\val{w}\jsgeq \val{u}$ and~$\val{w}$ belongs to~$b^{\wlen{u}}\EpR$.
Then,~$\ASNu{\Ac}$ is an \upsn of period~$p\xmd b^{\wlen{u}}$.
\end{proof}


\section{Conclusion and future work}

This work almost closes the complexity question raised by the
  Honkala's original paper~\cite{Honk86}.
The simplicity of the arguments in the proof should not hide that the
  difficulty was to make the proofs simple.
Two questions remain: getting rid, in~\theor{com-plx} of the minimality
  condition; or of the condition of determinism.

We are rather optimistic for a positive answer to the first one.
Since the minimisation of a DFA whose SCC's are simple cycles can be done
  in linear time~(\cf~\cite{AlmeZeit08}), it should be possible to verify 
  in linear time that the higher part of
  the \upcr (DAG and Type 2 SCC's) is satisfied by the minimised of 
  a given automaton without performing the whole minimisation.
It remains to find an algorithm deciding in linear time whether a
  given DFA has the same behaviour as a Pascal automaton.
This is the subject of still ongoing work of the authors.

On the other hand, defining a similar \upcr for nondeterministic
  automata seems to be much more difficult.
The criterion relies on the form and relations between SCC's,
  and the determinisation process is prone to destroy them.

\bibliography{%
  aux_bibliography,%
  \BIBINPUTSDIR Alexandrie-abbrevs,%
  \BIBINPUTSDIR Alexandrie-AC,%
  \BIBINPUTSDIR Alexandrie-DF,%
  \BIBINPUTSDIR Alexandrie-GL,%
  \BIBINPUTSDIR Alexandrie-MR,%
  \BIBINPUTSDIR Alexandrie-SZ%
}

\end{document}